\tikzstyle{tre}=[circle,draw,minimum size=1.5mm]
\tikzstyle{small}=[circle,draw,inner sep=0pt, minimum size=1mm]
\newtheorem{thm}{Theorem} 
\newtheorem{prop}[thm]{Proposition} 
\newtheorem{lem}[thm]{Lemma} 
\newtheorem{coro}[thm]{Corollary} 
\newdefinition{rmk}{Remark}
\newdefinition{df}{Definition}
\newcommand{\bmu}{\bm{\mu}}
\DeclareMathOperator{\indeg}{indeg}
\DeclareMathOperator{\outdeg}{outdeg}
\newcommand{\arc}{}
\newcommand{\defit}[2][]{%
  \emph{#2}%
%  \ifthenelse{\isempty{#1}}{\todo[color=green!40,size=tiny]{#2}}{\todo[color=green!40,size=tiny]{#2 [#1]}}%
}
\title{Comparison of orchard networks using their extended $\mu$-representation\tnoteref{t1}}
\author[1]{Gabriel Cardona}
\ead{gabriel.cardona@uib.es}
\author[1]{Joan Carles Pons\corref{cor1}}
\ead{joancarles.pons@uib.es}
\author[2]{Gerard Ribas}
\ead{gerard.ribas1@estudiant.uib.cat}
\author[1]{Tomás Martínez Coronado}
\ead{t.martinez@uib.eu}
\affiliation[1]{
    organization={Department of Mathematics and Computer Science -- University of the Balearic Islands},
    addressline={Ctra. Valldemossa, km. 7.5},
    postcode={ES-07122},
    city={Palma},
    country={Spain}}
\affiliation[2]{
    organization={Higher Polytechnic School -- University of the Balearic Islands},
    addressline={Ctra. Valldemossa, km. 7.5},
    postcode={ES-07122},
    city={Palma},
    country={Spain}}
\journal{arXiv}
\begin{document}

\begin{frontmatter}

\begin{abstract}
    Phylogenetic networks generalize phylogenetic trees in order to model reticulation events. Although the comparison of phylogenetic trees is well studied, and there are multiple ways to do it in an efficient way, the situation is much different for phylogenetic networks.

    Some classes of phylogenetic networks, mainly tree-child networks, are known to be classified efficiently by their $\mu$-representation, which essentially counts, for every node, the number of paths to each leaf.
    In this paper, we introduce the \emph{extended} $\mu$-representation of networks, where the number of paths to reticulations is also taken into account. This modification allows us to distinguish orchard networks and to define a sound metric on the space of such networks that can, moreover, be computed efficiently. 

    The class of orchard networks, as well as being one of the classes with biological significance (one such network can be interpreted as a tree with extra arcs involving coexisting organisms), is one of the most generic ones (in mathematical terms) for which such a representation can (conjecturally) exist, since a slight relaxation of the definition leads to a problem that is Graph Isomorphism Complete.
\end{abstract}

\begin{keyword}
Phylogenetic networks \sep Orchard networks \sep Isomorphism of networks \sep $\mu$-representation

\MSC 05C85 \sep 92D15
\end{keyword}

\end{frontmatter}

\section*{Introduction}

%Phylogenetics is the study of  evolutionary relationships among \emph{operational taxonomic units} (or \emph{OTUs}), that is, groups of organisms (e.g., species or populations), genes, etc. This evolutionary history, or phylogeny, has been commonly represented mathematically as a tree. 
%More precisely, 
%Considering a scenario where all the OTUs under study have evolved from a common ancestor, a \emph{phylogenetic tree} is then a directed graph with a distinguished node, the \emph{root} (without incoming arcs) corresponding to this common ancestor, all other nodes have a unique incoming arc, that models the evolution by mutation from the source node to the target node, and where the \emph{leaves} (nodes without outgoing arcs) correspond to the OTUs under study.

Phylogenetic networks generalize phylogenetic trees to model complex evolutionary relationships that are not well fitted by trees. Namely, phylogenetic networks can have \emph{reticulations}, nodes with multiple incoming arcs, that model interactions between existing operational taxonomic units (or OTUs, for short) giving rise to a new one. These interactions, which can be recombinations, hybridizations or lateral gene transfers (among others), are increasingly evident in organisms such as plants, bacteria and viruses, highlighting the limitations of tree representations~\cite{martin1999mosaic, linder2004reconstructing}. In this paper, we shall only consider \emph{binary} phylogenetic networks. This rules out the possibility that there are uncertainties about the order in which speciation events occur (``soft'' polytomies, in biological terms) and also that more than two different OTUs interact in a reticulation event.

The study of phylogenetic networks is a growing and very active field of research. However, working with phylogenetic networks, in its broadest sense, is not appropriate if the aim is to give objects that can be computationally tractable or to represent biologically meaningful scenarios. Then, pursuing these objectives,  multiple subclasses of phylogenetic networks have appeared in the literature during the last years (see~\cite{kong2022classes} for a recent review). In this paper, we shall focus on the subclass of orchard (phylogenetic) networks \cite{erdos2019class}, also called  cherry-picking networks \cite{janssen2021cherry}, that fulfill the conditions that they represent biologically significant scenarios and also form a computational and mathematical prominent class of networks.

Biologically, orchard networks have a natural  interpretation, since such a network can be understood as a  tree with \emph{horizontal arcs} (that is, involving coexistent OTUs) representing reticulate events \cite{van2022horizontal}. See Definition~6 in \cite{van2022horizontal} for a formal description of this type of time-consistent labeling (called HGT-consistent labeling therein), and note that it is similar but different from other definitions of time-consistency used for a long time in a similar context, like in \cite{moret2004phylogenetic} for instance. This biological interpretation also evidences that orchard networks are a subclass of tree-based networks \cite{francis2015phylogenetic}, where the additional arcs do not need to be horizontal. 
Mathematically, orchard networks are those that can be reduced to a trivial network by the iterative application of reductions, each of which removes certain atomic structures called \emph{cherries} and \emph{reticulated-cherries}. Note also that orchard networks include \emph{tree-child} networks~\cite{cardona2008comparison}, one of the most studied class of networks.

In this paper we shall consider the problem of the comparison of orchard networks. First, in order to detect if two (in principle) different networks are the same (more formally, \emph{isomorphic}). Second, and related to that, to quantify the difference between these two networks (more formally, compute some \emph{distance} between them). The comparison of networks is needed in practical applications, since different networks may arise when using different reconstruction methods or when using different samples of DNA. The usual strategy to make this comparison is to associate to each network some invariants of its isomorphism class that are easy to compare, and make this comparison in the space of invariants. 
Since the problem of deciding when two generic phylogenetic networks are isomorphic is Graph Isomorphism Complete~\cite{cardona_comparison_2014}, it is believed that there cannot exist invariants that can both distinguish arbitrary phylogenetic networks and be computed efficiently.
Hence, for each set of invariants there is a certain class of phylogenetic networks on which it separates networks (that is, where different networks have different invariants). These invariants can be either substructures of a certain kind (simpler than generic networks) included in the network  (like trees~\cite{willson2010regular} or triplets~\cite{Gambette2012,van2014trinets,cardona2009metrics2,Cardona2011,semple2021trinets}), distances between pairs of leaves~\cite{cardona2009metrics2,cardona2010path,Bordewich2016}, or certain data associated to the nodes of the network~\cite{cardona2008comparison,moret2004phylogenetic,erdos2019class}.

One of the invariants that fall into this last class is the \emph{$\mu$-representation} of the network, which is known to separate tree-child~\cite{cardona2008comparison} and semi-binary tree-sibling time-consistent~\cite{cardona2008distance} networks. This invariant allows us to define the \emph{$\mu$-distance} between networks, a sound distance in the space of (isomorphism classes of) networks of the aforementioned classes over a fixed set of taxa, that generalizes to networks the well known Robinson-Foulds distance on phylogenetic trees. The $\mu$-representation of a network is the (multi)set of the $\mu$-vectors of its nodes, which in turn count the number of paths from the node under consideration to each of the leaves. This data is essentially the same as the \emph{ancestral profile} of the network, which classifies \emph{stack-free} (that is, with no pair of reticulations connected by an arc) orchard networks~\cite{Bai2021}, but not generic orchard networks (notice that in~\cite{Bai2021} the authors give a counterexample to \cite[Theorem~2.2]{erdos2019class}).

%, where for each leaf one considers its ancestral profile, which is the number of paths from each node (in some fixed arbitrary ordering) to the leaf to be considered. Indeed, if one writes this number of paths in a table, talking about $\mu$-vectors or ancestral profiles amounts to choosing if one reads the table by rows or by columns. As it is proved in~\cite{Bai2021}, ancestral profiles (and hence $\mu$-data) separate \emph{stack-free} (that is, with no pair of reticulations connected by an arc) orchard networks, but not generic orchard networks (notice that in~\cite{Bai2021} the authors give a counterexample to \cite[Theorem~2.2]{erdos2019class}).

In this paper we show how a modification of the $\mu$-vectors (namely, counting also the overall number of paths from the node to any reticulation) allows to separate (binary) orchard networks with no further restrictions.

The paper is organized as follows. In Section~\ref{sec:preliminaries} we give the basic definitions used throughout the manuscript, and in Section~\ref{sec:orchard} we review the definition of orchard networks. In Section~\ref{sec:exended-mu} we define formally the (extended) $\mu$-representation associated to a network, and in Section~\ref{sec:mu-for-orchard} we prove that it is enough to classify orchard networks and we use it to define a sound metric distance on the class of such networks.
%In Section~\ref{sec:mu-sackin} we discuss how the $\mu$-representation of a phylogenetic network can be used to define a topological index that generalizes the Sackin index of trees.
Finally, in Section~\ref{sec:conclusions} we address some final remarks and show an implementation of the results of the paper.

\section{Preliminaries}
\label{sec:preliminaries}

In this section we give some definitions that will be used throughout the manuscript. 

For any positive integer $n$, we denote by $[n]$ the set $\{1,\dots,n\}$ and by $[n]^*$ the set $\{0,1,\dots,n\}$.

We shall be working with directed acyclic graphs $N=(V,A)$.
Given a node $u\in V$, we denote by $\indeg u$ (resp. $\outdeg u$) the number of arcs whose head (resp. tail) is $u$.
Given two nodes $u,v\in V$, if there is an arc with tail $u$ and head $v$  (or from $u$ to $v$), in symbols $uv\in A$, we say that $u$ is a parent of $v$ or that $v$ is a child of $u$. If there is a directed path from $u$ to $v$ we say that $u$ is an ancestor of $v$, or that $v$ is a descendant of $u$. Note that it includes the case where $u=v$, since we allow trivial paths, of length $0$. 

We denote by $M_N(u,v)$ the set of paths in $N$ from $u$ to $v$ and by $m_N(u,v)=|M_N(u,v)|$ the number of such paths; in particular, $m_N(u,u)=1$ for every node $u$ of $N$. If the graph is clear from the context, we shall omit the subscript and simply write $M(u,v)$ and $m(u,v)$.

We say that a node $u$ in a directed graph is \defit[node]{elementary} if $\indeg u=\outdeg u=1$. Its \defit{simplification} consists in removing it (together with its incident arcs) and connecting its single parent to its single child.

%\begin{df}\label{d:rooted.binary.net}
    A \defit{phylogenetic network}, or simply a \defit{network}, $N=(V,A,\varphi)$
    on a set $X$ of \emph{taxa}, is a directed acyclic graph $(V,A)$ without parallel arcs such that any node $u\in V$ is either:
    \begin{enumerate}[(i)]
        \item a \defit{root}, with $\indeg u=0$, $\outdeg u=1$ (and there can only be one such node), or
        \item a \defit{leaf}, with $\indeg u=1$, $\outdeg u=0$, or
        \item a \defit{tree node}, with $\indeg u=1$, $\outdeg u=2$, or
        \item a \defit{reticulation}, with $\indeg u=2$, $\outdeg u=1$,
    \end{enumerate}
    together with a \emph{labeling} function $\varphi$ which is a bijection between $X$ and the set of leaves.
    %We shall consider the leaves and the root to be tree nodes.
%    \end{df}

%    \GCJ{Això de posar les definicions dins del seu environment sempre em crea conflictes... definim moltes coses i només unes poques es posen així... i si en poses moltes queda massa brusc definició-teorema-definció-teorema. Jo casi que els trauria tots. Que opinau?} \JCP{els trauria també}

A \defit{phylogenetic tree} (or simply a \defit{tree}) is a phylogenetic network without reticulations.

We shall hereafter identify the set of taxa and the set of leaves, and assume that $X\subseteq V$, which means that we can drop $\varphi$ from the definition of network.
Given a phylogenetic network $N$, we shall denote by $V_H(N)$ the set of its reticulations and by $V_T(N)$ the set of nodes that are either leaves or tree nodes. 
Moreover, we denote by $\rho(N)$ the root of $N$. In case the network is clear from the context, we may simply write $V_H$, $V_T$ and $\rho$, respectively.

In case that $X$ is a singleton, $X=\{x\}$, we shall indicate by $I_x$ the network whose only nodes are its root and the leaf $x$, joined by an arc.

From now on, we shall always assume that $X$ is formed by positive integers, and hence $X\subseteq [n]$ for some $n$. %This is done so that we can index the entries of vectors using the elements in $X$. 

%We finish this section with the following definition, which is standard in the field of phylogenetics.

%\begin{df}
Finally, as it is usual in phylogenetics, we shall say that two networks $N_1=(V_1,A_1)$ and $N_2=(V_2,A_2)$ on $X$ are \defit[networks]{isomorphic}, in symbols $N_1\cong N_2$, if there exists a bijection $f: V_1 \rightarrow V_2$ such  %$f(x)=x$ for all $x \in X$,
that is the identity on $X$, and such that
 $uv \in A_1$ if and only if $f(u)f(v) \in A_2$.
%\end{df}

\section{Orchard networks}
\label{sec:orchard}

% Things to define:
% \begin{itemize}
%     \item Complete reducible sequence of $N$ \GCJ{He canviat maximal per complete... en tot cas, s'ha de mirar millor que sigui compatible amb papers anteriors}
%     \item Cherry-picking sequence in $N$ (same as above)
% \end{itemize}
% \textbf{[G: Definicions 3, 4 de l'article de les cherry-picking \cite{janssen2021cherry}]}\\

In this section, we summarize some definitions and results from~\cite{janssen2021cherry}, but notice that some of the notations that we use are taken from~\cite{erdos2019class}, and some others are new.

%\begin{df}
Let $N=(V,A)$ be a network on $X\subseteq[n]$ and let $(i,j)\in X\times X$ with $i\neq j$. Also, denote by $p_i, p_j$ the parents of the leaves $i$ and $j$ in $N$, respectively. We call $(i,j)$ a \defit{cherry} if $p_i=p_j$, and we call it a \defit{reticulated-cherry} if $p_i$ is a reticulation, $p_j$ is a tree node, and $p_j$ is one of the parents of $p_i$. In either case $(i,j)$ is a cherry or a reticulated-cherry, we say $(i,j)$ is \defit{reducible} in $N$.

If $(i,j)$ is reducible in $N$, the \defit{reduction} of $(i,j)$ in $N$, denoted by $N^{(i,j)}$, is the result of:
\begin{itemize}
    \item If $(i,j)$ is a cherry, then 
    remove the leaf $i$ (and its incoming arc) and simplify $p_i$, which is now an elementary node.
    %    the arc that connects $p_i$ with the leaf $i$ and then deleting the leaf $i$ and the node $p_i$, 
    \item If $(i,j)$ is a reticulated-cherry, then delete the arc $p_jp_i$ and then simplify $p_i$ and $p_j$, which are now elementary nodes. 
\end{itemize}
%\end{df}

%\GRA{Ara pensava, si $(a,b)$ és una cherry $(b,a)$ també ho és. Això no té importància en aquest paper però sí per generar-les perquè s'hi estableix un ordre. Per inèrcia, al codi només he agafat la més petita de les dues, ja que realment la xarxa queda igual i l'únic que canvia és l'etiqueta que sobreviu; i a més,  redueix considerablement el nombre de seqüències que redueixen una xarxa. Com ho feim: imposam aquesta reestricció aquí (crec que no és necessari per aquest paper), canviï el codi perque detecti els ``duplicats'' o ho deixam així i deim que en el codi hem considerat que $(a,b)\sim(b,a)$?} \GCJ{Jo no diria res en aquest paper i quan s'hagi de fer la generació ja en parlarem.}

Notice that the conditions of being a cherry and a reticulated-cherry are clearly incompatible, which implies that the operation is well defined.

Given a sequence of pairs of integers $S=(s_1,\dots,s_k)$ which, for brevity, we will write as $S=s_1\cdots s_k$, with $s_t=(i_t,j_t)$ and $i_t,j_t\in[n]$, of length $k\ge1$, we say that $S$ is \defit{reducible} in $N$ if:
\begin{itemize}
    \item $s_1$ is reducible in  $N$.
    \item For every $t\in\{ 2,\dots,k \}$, $s_t$ is reducible in
    $(\dots(N^{s_1})^{s_2}\dots)^{s_{t-1}}$.
\end{itemize}
In such a case, we shall define the \defit{reduction} of $N$ with respect to $S$ as $(\dots(N^{s_1})^{s_2}\dots)^{s_{k}}$ and it will be denoted by $N^S$.

Moreover, we say that $S$ is \defit{complete} if $N^S=I_i$ for some $i\in X$ and, in case that one such complete sequence exists, we call $N$ an \defit{orchard network}.

% \begin{prop}
%     Let $S$ be a complete reducible sequence for an orchard network $N$. Then, $|S|=n+r-1$, where $n$ and $r$ denote the number of leaves and reticulations in $N$, respectively.
% \end{prop}
% \begin{proof}
%     Lemma 4, \cite{janssen2021cherry}. \GRA{No sé si ens farà falta aquesta prop.} \JCP{diria que la podem treure.}
% \end{proof}

The fundamental result that allows to classify orchard networks using complete reducible sequences is the following, which is adapted from~\cite[Corollary~1]{janssen2021cherry}.

\begin{thm}
    \label{same-crs-implies-iso}
    Let $S$ be a complete reduction sequence for two orchard networks $N$ and $N'$. Then, $N\cong N'$. 
%    \JCP{referenciar o demostrar}\GRA{Cas particular del corollary 1, \cite{janssen2021cherry}}
\end{thm}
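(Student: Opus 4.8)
The plan is to argue by induction on the length $k$ of the sequence $S=s_1\cdots s_k$, exploiting the fact that each reduction is reversible up to isomorphism once we know the pair $s_t=(i_t,j_t)$ and the type (cherry or reticulated-cherry) of the reduction performed. The crucial observation is that this type can always be recovered a posteriori from the reduced network: a cherry reduction on $(i,j)$ deletes the leaf $i$, whereas a reticulated-cherry reduction on $(i,j)$ keeps it, merely changing its parent. Hence, for any network $N$ in which $(i,j)$ is reducible, the pair $(i,j)$ is a cherry in $N$ if and only if $i$ is no longer a leaf of $N^{(i,j)}$.

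The key step is the following claim, which says that reduction is injective up to isomorphism: \emph{if $(i,j)$ is reducible in two networks $N_1$ and $N_2$ and $N_1^{(i,j)}\cong N_2^{(i,j)}$, then $N_1\cong N_2$.} To prove it, I would introduce the two inverse (expansion) operations on a network $M$ having $j$ as a leaf: the \emph{cherry expansion}, which subdivides the arc entering $j$ by a new node and hangs a fresh leaf $i$ from it; and the \emph{reticulated-cherry expansion} (available when $i$ is also a leaf of $M$), which subdivides the arcs entering $i$ and $j$ by new nodes $p_i,p_j$ and adds the arc $p_jp_i$. A short local check then shows that expanding $N^{(i,j)}$ with the type dictated by the criterion above returns a network isomorphic to $N$, so reduction and the correctly-typed expansion are mutually inverse up to isomorphism. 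Since both the criterion and the two expansions are invariant under taxon-preserving isomorphisms, an isomorphism $N_1^{(i,j)}\cong N_2^{(i,j)}$ first forces the two reductions to be of the same type and then extends across the common expansion to an isomorphism $N_1\cong N_2$, establishing the claim.

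With the claim in hand the induction is routine. Since the last reduction must collapse a two-leaf network into a single leaf, $s_k$ is necessarily a cherry and $N^S=(N')^S=I_{j_k}$. For the base case $k=1$, the taxon $i_1$ is not a leaf of $I_{j_1}$, so both $N$ and $N'$ are the cherry expansion of $I_{j_1}$ by $(i_1,j_1)$, which is unique up to isomorphism; hence $N\cong N'$. For the inductive step, the tail $S'=s_2\cdots s_k$ is a complete sequence for both $N^{s_1}$ and $(N')^{s_1}$, so the inductive hypothesis gives $N^{s_1}\cong(N')^{s_1}$; applying the claim to the reducible pair $s_1$ then yields $N\cong N'$.

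I expect the main obstacle to be a clean and complete proof of the claim, and in particular the verification that the typed expansion genuinely inverts the reduction in every configuration. This demands care with the elementary-node suppressions --- checking that the expansion restores exactly the suppressed nodes $p_i$ and $p_j$ with the correct incidences --- and with boundary situations such as the parent of $j$ being the root, or the two subdivided arcs sharing an endpoint, as well as confirming that the added reticulation arc $p_jp_i$ introduces neither a parallel arc nor a directed cycle. Once reversibility and type-detection are secured, everything else is bookkeeping.
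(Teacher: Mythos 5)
Your proposal is correct and follows essentially the same route as the paper, which only sketches this result (adapted from Corollary~1 of the cited work of Janssen and Murakami) as ``take the trivial network and invert the reductions''; your induction together with the injectivity-up-to-isomorphism claim, and the observation that the reduction type is recoverable from whether the leaf $i$ survives, is precisely the formalization of that sketch. The local checks you flag (no parallel arcs or directed cycles after expansion, boundary cases at the root) all go through because each of $i$ and $j$ has a unique parent, so no gap remains.
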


Notice, however, that the complete reduction sequence for an orchard network is not, in general, unique. 

Figure~\ref{fig:definicions} shows an orchard network together with a sequence of reductions that reduce it to a trivial network. Notice that we can take the trivial network, invert the reductions that have been applied, and recover the initial network, using only the information of the sequence of reductions that have been applied. This is essentially the proof of Theorem~\ref{same-crs-implies-iso}.

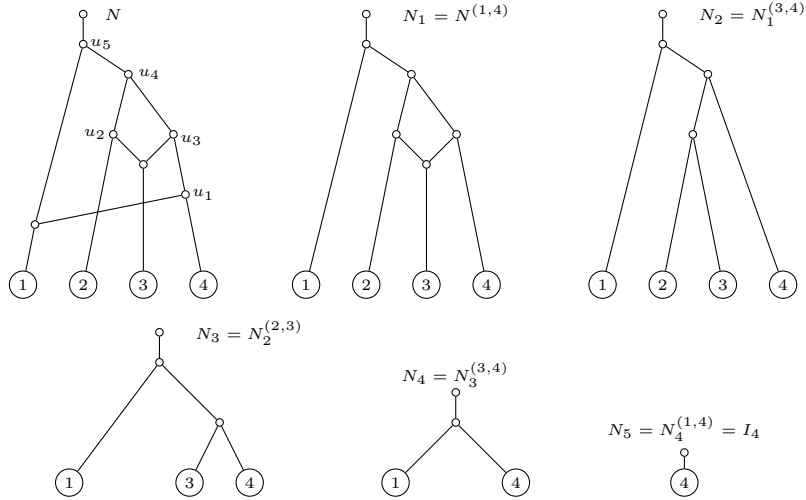
\begin{figure}[t]
\centering
\begin{tabular}{@{}c@{}}
\begin{tikzpicture}[scale=0.8]
\draw(0,0) node[tre] (1) {};
\draw (1) node {\tiny $1$};
\draw(1,0) node[tre] (2) {};
\draw (2) node {\tiny $2$};
\draw(2,0) node[tre] (3) {};
\draw (3) node {\tiny $3$};
\draw(3,0) node[tre] (4) {};
\draw (4) node {\tiny $4$};

\draw(0.2,1) node[small] (f) {};
%\draw (-0.2,1) node {\tiny $u_1$};
\draw(2.7,1.5) node[small] (d) {};
\draw (3,1.5) node {\tiny $u_1$};
\draw(1.5,2.5) node[small] (b) {};
\draw (1.2,2.5) node {\tiny $u_2$};
\draw(2.5,2.5) node[small] (c) {};
\draw (2.8,2.4) node {\tiny $u_3$};
\draw(2,2) node[small] (e) {};
%\draw (2,2.3) node {\tiny $u_4$};
\draw(1.75,3.5) node[small] (a) {};
\draw (2.1,3.5) node {\tiny $u_4$};
\draw(1,4) node[small] (r) {};
\draw (1.3,4) node {\tiny $u_5$};
\draw(1,4.5) node[small] (newr) {};
\draw (1.5,4.5) node {\tiny $N$};

\draw[](newr)--(r);
\draw[](r)--(a);
\draw[](r)--(f);
\draw[](a)--(b);
\draw[](a)--(c);
\draw[](b)--(2);
\draw[](b)--(e);
\draw[](c)--(e);
\draw[](c)--(d);
\draw[](e)--(3);
\draw[](d)--(f);
\draw[](d)--(4);
\draw[](f)--(1);
\end{tikzpicture}
\qquad
 \begin{tikzpicture}[scale=0.8]
\draw(0,0) node[tre] (1) {};
\draw (1) node {\tiny $1$};
\draw(1,0) node[tre] (2) {};
\draw (2) node {\tiny $2$};
\draw(2,0) node[tre] (3) {};
\draw (3) node {\tiny $3$};
\draw(3,0) node[tre] (4) {};
\draw (4) node {\tiny $4$};

%\draw(0.2,1) node[small] (f) {};
%\draw(2.7,1.5) node[small] (d) {};
\draw(1.5,2.5) node[small] (b) {};
\draw(2.5,2.5) node[small] (c) {};
\draw(2,2) node[small] (e) {};
\draw(1.75,3.5) node[small] (a) {};
\draw(1,4) node[small] (r) {};
\draw(1,4.5) node[small] (newr) {};
\draw (2.5,4.5) node {\tiny $N_1=N^{(1,4)}$};

\draw[](newr)--(r);
\draw[](r)--(a);
\draw[](r)--(1);
\draw[](a)--(b);
\draw[](a)--(c);
\draw[](b)--(2);
\draw[](b)--(e);
\draw[](c)--(e);
\draw[](c)--(4);
\draw[](e)--(3);
%\draw[](d)--(f);
%\draw[](d)--(4);
%\draw[](f)--(1);
\end{tikzpicture}
\qquad
\begin{tikzpicture}[scale=0.8]
\draw(0,0) node[tre] (1) {};
\draw (1) node {\tiny $1$};
\draw(1,0) node[tre] (2) {};
\draw (2) node {\tiny $2$};
\draw(2,0) node[tre] (3) {};
\draw (3) node {\tiny $3$};
\draw(3,0) node[tre] (4) {};
\draw (4) node {\tiny $4$};

%\draw(0.2,1) node[small] (f) {};
%\draw(2.7,1.5) node[small] (d) {};
\draw(1.5,2.5) node[small] (b) {};
%\draw(2.5,2.5) node[small] (c) {};
%\draw(2,2) node[small] (e) {};
\draw(1.75,3.5) node[small] (a) {};
\draw(1,4) node[small] (r) {};
\draw(1,4.5) node[small] (newr) {};
\draw (2.5,4.5) node {\tiny $N_2=N_1^{(3,4)}$}; %$N^{(1,4)(3,4)}$};

\draw[](newr)--(r);
\draw[](r)--(a);
\draw[](r)--(1);
\draw[](a)--(b);
\draw[](a)--(4);
\draw[](b)--(2);
\draw[](b)--(3);
%\draw[](c)--(e);
%\draw[](c)--(4);
%\draw[](e)--(3);
%\draw[](d)--(f);
%\draw[](d)--(4);
%\draw[](f)--(1);
\end{tikzpicture}
\\
\begin{tikzpicture}[scale=0.8]
\draw(0,0) node[tre] (1) {};
\draw (1) node {\tiny $1$};
%\draw(1,0) node[tre] (2) {};
%\draw (2) node {\tiny $x_2$};
\draw(2,0) node[tre] (3) {};
\draw (3) node {\tiny $3$};
\draw(3,0) node[tre] (4) {};
\draw (4) node {\tiny $4$};

%\draw(0.2,1) node[small] (f) {};
%\draw(2.7,1.5) node[small] (d) {};
%\draw(1.5,2.5) node[small] (b) {};
%\draw(2.5,2.5) node[small] (c) {};
%\draw(2,2) node[small] (e) {};
\draw(2.5,1) node[small] (a) {};
\draw(1.5,2) node[small] (r) {};
\draw(1.5,2.5) node[small] (newr) {};
\draw (3,2.5) node {\tiny $N_3=N_2^{(2,3)}$}; %N^{(1,4)(3,4)(2,3)}$};

\draw[](newr)--(r);
\draw[](r)--(a);
\draw[](r)--(1);
\draw[](a)--(3);
\draw[](a)--(4);
%\draw[](b)--(2);
%\draw[](b)--(3);
%\draw[](c)--(e);
%\draw[](c)--(4);
%\draw[](e)--(3);
%\draw[](d)--(f);
%\draw[](d)--(4);
%\draw[](f)--(1);
\end{tikzpicture}
\qquad
\begin{tikzpicture}[scale=0.8]
\draw(0,0) node[tre] (1) {};
\draw (1) node {\tiny $1$};
%\draw(1,0) node[tre] (2) {};
%\draw (2) node {\tiny $x_2$};
%\draw(2,0) node[tre] (3) {};
%\draw (3) node {\tiny $x_3$};
\draw(2,0) node[tre] (4) {};
\draw (4) node {\tiny $4$};

%\draw(0.2,1) node[small] (f) {};
%\draw(2.7,1.5) node[small] (d) {};
%\draw(1.5,2.5) node[small] (b) {};
%\draw(2.5,2.5) node[small] (c) {};
%\draw(2,2) node[small] (e) {};
%\draw(1.75,3.5) node[small] (a) {};
\draw(1,1) node[small] (r) {};
\draw(1,1.5) node[small] (newr) {};
\draw (1,1.8) node {\tiny $N_4=N_3^{(3,4)}$};  % N^{(1,4)(3,4)(2,3)(3,4)}$};

\draw[](newr)--(r);
%\draw[](r)--(a);
\draw[](r)--(1);
\draw[](r)--(4);
%\draw[red](a)--(4);
%\draw[](b)--(2);
%\draw[](b)--(3);
%\draw[](c)--(e);
%\draw[](c)--(4);
%\draw[](e)--(3);
%\draw[](d)--(f);
%\draw[](d)--(4);
%\draw[](f)--(1);
\end{tikzpicture}
\qquad
\begin{tikzpicture}[scale=0.8]
\draw(0,0) node[tre] (4) {};
\draw (4) node {\tiny $4$};
\draw(0,0.5) node[small] (newr) {};
\draw (0,0.9) node {\tiny $N_5=N_4^{(1,4)}=I_4$};% $N^{(1,4)(3,4)(2,3)(3,4)(1,4)}=I_4$};
\draw[](newr)--(4);
\end{tikzpicture}
 \end{tabular}

 \caption{
    An orchard network $N$ and the set of intermediate networks obtained by (cherry or reticulated-cherry) reductions until reaching $N^S$, the reduction of $N$ with respect to the complete sequence $S=(1,4)(3,4)(2,3)(3,4)(1,4)$. Note that, for example, the pair $(1,4)$ is a reticulated-cherry in $N$ and the pair $(2,3)$ is a cherry in $N_2$.}
 \label{fig:definicions}
\end{figure}

\section{Extended $\mu$-representation of a network}
\label{sec:exended-mu}

In this section we extend the definition of $\mu$-representation from~\cite{cardona2008comparison} and define a reduction operation that reduces certain building blocks of such data. In the next section, we shall see how this extension allows us to characterize orchard networks, in analogous way that the original definition allowed to compare tree-child networks.

\subsection*{Original $\mu$-representations and ancestral profiles}

We first recall the definitions of $\mu$-representation~\cite{cardona2008comparison} and of ancestral profile~\cite{erdos2019class}, but with some slight changes in order to homogenize notations with the rest of the paper.

Let $N=(V,A)$ be a network over $X\subseteq[n]$. For each node $u\in V$ and $i\in[n]$, we define $\mu_i(u)$ as the number of paths that start in $u$ and end in the leaf $i$, that is,
\[\mu_i(u)=m(u,i).\]
Obviously, if $i\notin X$, then $\mu_i(u)=0$.
% We also define $\mu_0(u)$ as the number of paths that start in $u$ and end in a reticulation of $N$, that is,
% \[\mu_0(u)=\sum_{h\in V_H}m(u,h).\]
% Clearly, if $N$ has no reticulations, then $\mu_0(u)=0$.
The \defit{$\mu$-vector} of $u$ is the vector
\[\mu(u)=(\mu_1(u),\dots,\mu_n(u)).\]
In case that the network is not clear from the context, we will specify it writing $\mu^N(u)=(\mu^N_1(u),\dots,\mu^N_n(u))$.

The \defit{$\mu$-representation} of $N$ is the (multi)set
\[\bmu(N)=\{\mu(u) \mid u\in V_T\}.\] 
Note that in \cite{cardona2008comparison}, all nodes were taken into account, that is, $\bmu(N)$ also contained the $\mu$-vectors of the reticulations, but since the $\mu$-vector of such a node is equal to that of its single child, then both representations are equivalent.

Take now an arbitrary but fixed ordering of the tree nodes, say $v_1,\dots,v_t$. For every $i\in X$, consider its \emph{ancestral tuple} $\sigma(i)=(m(v_1,i),\dots,m(v_t,i))$, and let $\Sigma_N=\{(i,\sigma(i))\mid i\in X\}$ be the \emph{ancestral profile} of the network. 

Note that we can freely reorder the tree nodes (that is, applying a permutation to the nodes we get isomorphic networks), which amounts to applying the same permutation to the entries of all the ancestral tuples, but the sequence of ancestrals tuples cannot be reordered (that is, a permutation of the leaves gives non-isomorphic networks). Hence, mathematically, the object we get is an (ordered) sequence of (ordered) sequences, but on the latter the symmetric group acts simultaneously on all of them.

%Note that since we can also consider the elements in $X$ with a fixed ordering (a permutation of the leaves gives non-isomorphic networks) 

%as ordered, then $\Sigma_N$ can also be considered as a sequence (that is, a set in which order matters) of ancestral tuples. 
%We may think of these data as a table, where each row corresponds to a leaf and gives its ancestral tuple. On the one hand, the ordering of the rows cannot be altered, since a permutation of the leaves does not, in general, give isomorphic networks. On the other hand, since the ordering of the tree nodes is artificial (that is, if we take another ordering, we get an isomorphic network), so is the ordering of the columns. Hence, in order to compare two different tables, we need to check if there is a permutation of the columns of one of them such that we get the other one. Hence, mathematically, the object we get is an (ordered) sequence of (ordered) sequences, but on the latter the symmetric group acts simultaneously on all of them.

This object is equivalent to the $\mu$-representation, since both of them count the number of paths between tree nodes and leaves, but in the case of the $\mu$-representation the object is simpler, since we get a set, whose elements are (ordered) sequences, with no further action. 

%Obviously, both objects are equivalent, since they are simply two different ways to look at the same, but we think that the representation is much simpler in the second approach.

The $\mu$-representation of a network (or, equivalently, its ancestral profile) is known to characterize different kinds of phylogenetic networks. Namely, tree-child~\cite{cardona2008comparison}, semi-binary tree-sibling time-consistent~\cite{cardona2008distance}, and stack-free orchard~\cite{Bai2021} networks.

\subsection*{Extended $\mu$-vectors}

%Let $N=(V,A)$ be a network over $X\subseteq[n]$. For each node $u\in V$ and $i\in[n]$, we define $\mu_i(u)$ as the number of paths that start in $u$ and end in the leaf $i$, that is,
%\[\mu_i(u)=m(u,i).\]

%Obviously, if $i\notin X$, then $\mu_i(u)=0$.

We extend the definition of $\mu$-vectors by defining
$\mu_0(u)$ as the number of paths that start in $u$ and end in a reticulation of $N$, that is,
\[\mu_0(u)=\sum_{h\in V_H}m(u,h).\]
Clearly, if $N$ has no reticulations, then $\mu_0(u)=0$.
Then, the \defit{extended $\mu$-vector} of $u$ is the vector
\[\mu(u)=(\mu_0(u),\mu_1(u),\dots,\mu_n(u)).\]
As before, in case that the network is not clear from the context, we will specify it writing $\mu^N(u)=(\mu^N_0(u),\mu^N_1(u),\dots,\mu^N_n(u))$.

The \defit{extended $\mu$-representation} of $N$ is the (multi)set
\[\bmu(N)=\{\mu(u) \mid u\in V_T\}.\] 

Table \ref{tab:mudata} shows the extended $\mu$-representation of the network $N$ shown in Figure \ref{fig:definicions}. 

\begin{table}[ht]
    \centering
\begin{tabular}{cc} 
 \toprule
$u$  & $\mu(u)$ \\ 
 \midrule
 $1$ & $(0,1,0,0,0)$ \\
$2$ & $(0,0,1,0,0)$\\ 
$3$ & $(0,0,0,1,0)$\\ 
$4$ & $(0,0,0,0,1)$\\  
$u_1$ & $(1,1,0,0,1)$\\ 
$u_2$ & $(1,0,1,1,0)$\\ 
$u_3$ & $(2,1,0,1,1)$\\ 
$u_4$ & $(3,1,1,2,1)$\\ 
$u_5$ & $(4,2,1,2,1)$\\ 
 \bottomrule
\end{tabular}
    \caption{Extended $\mu$-representation of the network $N$ in Figure \ref{fig:definicions}.}
    \label{tab:mudata}
\end{table}

Sometimes we shall omit the word ``extended'' and simply talk of $\mu$-vector and $\mu$-representation. In case we refer to the original definition~\cite{cardona2008comparison} (where the number of paths to reticulations is not counted), we shall explicitly say so.
Note also that we have defined $\bmu(N)$ as a multiset, in which every vector $\mu$ appears as many times as the number of internal tree nodes having it as its $\mu$-vector. We shall later see that in the cases we are interested in, no pair of different tree nodes can have the same $\mu$-vector, and hence we can consider it a set. For now, we keep considering it a multiset.

For any $S\subseteq [n]^*$, let $\delta_S$ be the vector whose $i$-th component (indexed from $0$ to $n$) is $1$ if $i\in S$ and $0$ if $i\in[n]^*\setminus S$. To simplify the notation, for any $j_1,\dots,j_k\in[n]^*$, we define $\delta_{j_1,\dots,j_k}=\delta_{\{j_1,\dots,j_k\}}$; in particular, if $j\in[n]^*$, then $\delta_{j}=\delta_{\{j\}}$. Note that $\bmu(I_i)=\{\delta_i\}$, where we recall that $I_i$ is the trivial tree whose only leaf is $i$.

The following proposition shows that the $\mu$-representation of a network $N=(V,A)$ on $X\subseteq[n]$ can be computed sequentially, for instance, visiting the nodes in order of increasing height (which is the length of a longest path starting at the node, and can be computed in $O(|V|)$ time). As a consequence, the $\mu$-representation of such a network  can be computed in $O(|V|n)$ time.

\begin{prop}\label{formula-for-mu}
    Let $u\in V$. Then:
    \begin{itemize}
        \item If $u$ is the leaf $i$, then $\mu(u)=\delta_i$.
        \item If $u$ is a tree node, then $\mu(u)=\mu(v)+\mu(w)$, where $v,w$ are the children of $u$.
        \item If $u$ is a reticulation, then $\mu(u)=\mu(v)+\delta_0$, where $v$ is the single child of $u$.
    \end{itemize}
\end{prop}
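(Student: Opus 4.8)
The plan is to verify each of the three cases directly from the definition of the extended $\mu$-vector, namely $\mu(u)=(\mu_0(u),\mu_1(u),\dots,\mu_n(u))$ where $\mu_i(u)=m(u,i)$ for $i\in[n]$ and $\mu_0(u)=\sum_{h\in V_H}m(u,h)$. The key observation throughout is that paths in a directed acyclic graph can be decomposed according to the first arc they traverse out of $u$, and this gives a recursion on $m(u,\cdot)$ in terms of the children of $u$. So the whole proof reduces to counting paths by conditioning on the first step.

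First I would handle the leaf case: if $u$ is the leaf $i$, then $u$ has no outgoing arcs, so the only path starting at $u$ is the trivial path of length $0$, which ends at $u=i$. Hence $m(u,i)=1$, $m(u,j)=0$ for $j\neq i$, and $m(u,h)=0$ for every reticulation $h$ (since a leaf is not a reticulation and has no descendants other than itself), giving $\mu_0(u)=0$. This is exactly $\delta_i$.

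Next I would treat the tree-node case. If $u$ is a tree node with children $v,w$, then every nontrivial path starting at $u$ begins with either the arc $uv$ or the arc $uw$, and since $v\neq w$ these two families are disjoint. For any target $z$ with $z\neq u$, concatenating $uv$ (resp.\ $uw$) with a path from $v$ (resp.\ $w$) to $z$ sets up a bijection, so $m(u,z)=m(v,z)+m(w,z)$. Because a tree node is neither a leaf nor a reticulation, $u$ is never itself a valid target in any of the relevant counts, so the trivial path at $u$ contributes nothing; applying the identity coordinatewise to every leaf $i$ and summing over all $h\in V_H$ for the $0$-th coordinate yields $\mu(u)=\mu(v)+\mu(w)$.

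Finally, for the reticulation case, $u$ has a single child $v$, so every nontrivial path from $u$ factors uniquely through the arc $uv$, giving $m(u,z)=m(v,z)$ for every $z\neq u$; in particular $\mu_i(u)=\mu_i(v)$ for all $i\in[n]$, since leaves differ from $u$. The one subtlety, and the only place that needs care, is the $0$-th coordinate: here $u$ itself is a reticulation, so the trivial path from $u$ to $u$ contributes $m(u,u)=1$ to the sum $\sum_{h\in V_H}m(u,h)$, over and above the paths counted through $v$. Thus $\mu_0(u)=\sum_{h\in V_H}m(v,h)+m(u,u)=\mu_0(v)+1$, while all other coordinates are unchanged, which is precisely $\mu(u)=\mu(v)+\delta_0$. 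I expect this bookkeeping of the trivial path at a reticulation to be the main (if minor) obstacle, since it is the single point where the first-step decomposition is not exhaustive and the convention $m(u,u)=1$ must be invoked explicitly.
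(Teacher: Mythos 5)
Your proof is correct and follows essentially the same route as the paper's: a first-step decomposition of paths giving bijections with paths from the children, with the only subtlety being the extra trivial path from a reticulation to itself contributing $1$ to $\mu_0$. Nothing to add.
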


\begin{proof}
    It is clear that the only paths that start in a leaf are trivial, and end in the same leaf. This proves the first assertion.

    If $u$ is a tree node, with children $v$ and $w$, then for each leaf $i$, the set of paths in $M(u,i)$ is in bijection with the disjoint union of the paths in $M(v,i)$ and $M(w,i)$, by prepending or deleting the arc $uv$ and $uw$ respectively. Since $u$ is not a reticulation, this same remark holds for the paths leading to reticulations.

    If $u$ is a reticulation and $v$ is its single child, then for each leaf $i$, the set of paths in $M(u,i)$ is in bijection with the set of paths in $M(v,i)$, by prepending or deleting the arc $uv$. As for paths from $u$ to reticulations, apart from those that are obtained from the paths starting at $v$ by prepending the arc $uv$, there is a new (trivial) path starting (and ending) in $u$. 
\end{proof}

The relation ``being descendant of'' gives to the set of nodes of a network the structure of partially ordered set. Also, the $\mu$-vectors of nodes admit such a structure, namely saying that $\mu(v)\le \mu(u)$ if and only if $\mu_i(v)\le\mu_i(u)$ for every $i\in[n]^*$. The following technical lemma, which shall be used later, shows how these two partial orders are related.

\begin{lem}\label{paths_and_mu}
    Let $u,v$ be two nodes of a network $N$. 
    \begin{itemize} 
        \item If $v$ is descendant of $u$, then $\mu(v)\le \mu(u)$.
        \item If, moreover, $u$ and $v$ are different tree nodes, then $\mu(u)\neq \mu(v)$.
    \end{itemize}
\end{lem}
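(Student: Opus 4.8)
The plan is to treat the two bullets separately, relying in both cases on the elementary fact that directed paths compose.

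For the first part I would use a path-composition argument applied uniformly to every ``target''. Fix any node $t$ (in the application $t$ will be either a leaf $i\in[n]$ or a reticulation $h\in V_H$). Since $N$ is acyclic, no node repeats along a directed path, so every path from $u$ to $t$ that passes through $v$ factors uniquely as a path $u\to v$ followed by a path $v\to t$; hence the number of such paths is exactly $m(u,v)\,m(v,t)$. Consequently $m(u,t)\ge m(u,v)\,m(v,t)\ge m(v,t)$, where the last step uses $m(u,v)\ge1$, valid precisely because $v$ is a descendant of $u$. Taking $t=i$ for each leaf gives $\mu_i(u)\ge\mu_i(v)$, and summing the same inequality over $t=h\in V_H$ gives $\mu_0(u)\ge\mu_0(v)$. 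Together these yield $\mu(v)\le\mu(u)$.

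For the second part I would first note that, as $v$ is a descendant of $u$ with $u\neq v$, there is a nontrivial path from $u$ to $v$, so $u$ cannot be a leaf; being a tree node it therefore has exactly two children, say $a$ and $b$, and the path to $v$ leaves $u$ through one of them, say $a$. Thus $v$ is a descendant of $a$, and the first part gives $\mu(v)\le\mu(a)$. By Proposition~\ref{formula-for-mu} we have $\mu(u)=\mu(a)+\mu(b)$, so it suffices to exhibit one coordinate in which $\mu(b)$ is strictly positive: for such a coordinate $j$,
\[
\mu_j(u)=\mu_j(a)+\mu_j(b)\ge\mu_j(v)+\mu_j(b)>\mu_j(v),
\]
whence $\mu(u)\neq\mu(v)$.

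The only point needing justification — and the only place the structure of a phylogenetic network really enters — is that $\mu(b)\neq0$. This holds because in the finite acyclic graph $N$ the only nodes of out-degree $0$ are the leaves, so repeatedly following outgoing arcs from $b$ must terminate at some leaf $j$, and that path witnesses $\mu_j(b)\ge1$. I do not expect a genuine obstacle here; the main things to be careful about are the bookkeeping in the first part (handling the leaf coordinates and the reticulation coordinate $\mu_0$ uniformly as ``targets'', and checking that the factorization of paths through $v$ is a genuine bijection, which is exactly where acyclicity is used), together with the observation that having the proper descendant $v$ forces $u$ to be an internal tree node, so that the additive formula of Proposition~\ref{formula-for-mu} applies.
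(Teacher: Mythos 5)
Your proof is correct and follows essentially the same route as the paper's: the second bullet is exactly the paper's argument via $\mu(u)=\mu(a)+\mu(b)$ with $\mu(b)\neq 0$ (you even supply the justification that $\mu(b)\neq 0$, which the paper only asserts, and you rightly dispense with the paper's separate case where the child on the path is a reticulation, since the additive argument covers it). The only divergence is in the first bullet, where you count paths through $v$ directly via the composition bijection instead of invoking the recursion of Proposition~\ref{formula-for-mu}; both arguments are routine and valid.
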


\begin{proof}
    The first statement follows easily from Proposition~\ref{formula-for-mu}.

    For the second one, suppose that there is a non-trivial path from a tree node $u$ to another tree node $v$ and $\mu(u)=\mu(v)$. From the first part of the statement, all nodes in this path have the same $\mu$ vector. Let $w\neq u$ the first visited node in this path (and note that $w$ is a child of $u$). 
    If $w$ is a reticulation (and notice that $w\neq v$ since $v$ is a tree node), consider its single child $w'$; then, $\mu_0(w)=\mu_0(w')+1>\mu_0(w')$ and we get $\mu_0(u)\ge\mu_0(w)>\mu_0(w')\ge\mu_0(v)$, which is a contradiction.  If $u$ is a tree node, and $w'$ is the child of $u$ different from $w$, then $\mu(u)=\mu(w)+\mu(w')$ and again we get a contradiction since $\mu(w')\neq 0$.
\end{proof}

%\begin{lem}
%    Let $N$ be a network on $X$.
%    \begin{enumerate}
%        \item If $i\in X$, the only node $u\in V_T$ with $\mu(u)=\delta_i$ is $i$.
%    \end{enumerate}
%\end{lem}

\subsection*{Cherries and reticulated-cherries in $\mu$-representations}

Now we mimic the definitions of cherry and reticulated-cherry, which were introduced for pairs of leaves of a network, to be applied over pairs of indices in a (multi)set of vectors, and we prove that they make sense, since taking the $\mu$-representation of a network transforms the former into the latter.

Let $\bmu$ be a (multi)set formed by elements that are vectors of length $n+1$, whose entries are non-negative integers and are indexed by $0,1,\dots,n$. We say that $(i,j)$ is a \defit{cherry} of $\bmu$ if:    
\begin{enumerate}
    \item $\delta_{i,j}\in\bmu$ (with multiplicity $1$ in the multiset) and
    \item for each $\mu=(\mu_0,\mu_1,\dots,\mu_n)\in \bmu\setminus\{\delta_i,\delta_j\}$, we have that $\mu_i=\mu_j$.
\end{enumerate} 

\begin{prop}
    \label{cherry-in-terms-of-mu}
    Let $i,j\in[n]$. Then, $(i,j)$ is a cherry of $N$ if, and only if, it is a cherry of $\bmu(N)$.
\end{prop}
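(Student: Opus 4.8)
The plan is to prove the biconditional by analyzing what the combinatorial definition of a cherry in $\bmu(N)$ says in terms of the structure of $N$, using Proposition~\ref{formula-for-mu} to translate between paths and $\mu$-vectors. First I would establish the forward direction: suppose $(i,j)$ is a cherry of $N$, so the leaves $i$ and $j$ share a common parent $p$. Since $p$ is a tree node with children exactly the leaves $i$ and $j$, Proposition~\ref{formula-for-mu} gives $\mu(p)=\mu(i)+\mu(j)=\delta_i+\delta_j=\delta_{i,j}$, so condition~(1) holds provided no other tree node has this same vector. To verify the multiplicity is $1$, I would invoke Lemma~\ref{paths_and_mu}: any other node $u$ with $\mu(u)=\delta_{i,j}$ would have $\mu_0(u)=0$ and exactly the leaves $i,j$ as descendants (counted with the right multiplicities), and I would argue that the only tree node capable of producing precisely this count is $p$ itself, since any strict ancestor of $p$ picks up additional paths and hence a strictly larger $\mu$-vector. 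For condition~(2), I would observe that any tree node $u\notin\{i,j\}$ reaches the leaf $i$ only through paths that pass through $p$ (as $i$ has the single parent $p$), and likewise for $j$ through $p$; since $p$ reaches $i$ and $j$ by exactly one arc each, every such path to $i$ extends uniquely to a path to $j$ and conversely, giving $\mu_i(u)=\mu_j(u)$.

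Next I would prove the converse: assume $(i,j)$ is a cherry of $\bmu(N)$ and deduce that $i,j$ share a parent in $N$. Condition~(1) says $\delta_{i,j}\in\bmu(N)$, so there is a tree node $u$ with $\mu(u)=\delta_{i,j}$, meaning $\mu_0(u)=0$ (no path from $u$ to any reticulation) and $u$ reaches only the leaves $i$ and $j$, each by exactly one path. Since $\mu_0(u)=0$, no reticulation lies below $u$, so the part of $N$ below $u$ is a tree; a binary tree node reaching exactly two leaves, each once, must have those two leaves as its two children, so $u=p_i=p_j$ and $(i,j)$ is a cherry. The main work is ruling out the degenerate configurations: I would use condition~(2) together with Lemma~\ref{paths_and_mu} to exclude, for instance, the case where $u$ has a single child reaching both leaves, or where the two leaves hang at different depths.

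I expect the main obstacle to be the careful treatment of multiplicities and the exclusion of spurious nodes, rather than the core path-counting bijections. Specifically, the uniqueness claim in condition~(1) (multiplicity exactly $1$) and its use in the converse require ruling out the possibility that a second tree node shares the vector $\delta_{i,j}$; here the strict-inequality clause of Lemma~\ref{paths_and_mu} (different tree nodes in an ancestor-descendant relation have different $\mu$-vectors) is the key tool, but I would need to confirm that the only tree nodes whose $\mu$-vector can equal $\delta_{i,j}$ are related to $p$, so that the hypothesized multiplicity-$1$ condition pins down a genuine shared parent. The secondary subtlety is checking that condition~(2), $\mu_i(u)=\mu_j(u)$ for all other tree nodes $u$, is genuinely equivalent to $i$ and $j$ having the \emph{same} parent rather than merely lying below the same ancestors; I would argue that if $i$ and $j$ had distinct parents then some tree node (e.g.\ one parent, or a node separating them) would reach one leaf but not the other, violating condition~(2).
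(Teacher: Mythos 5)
Your proposal follows essentially the same route as the paper's proof: in the forward direction you compute $\mu(p)=\delta_{i,j}$, pin down multiplicity one by noting that any other witness must be comparable to $p$ (every path to $j$ passes through $p$) and then applying Lemma~\ref{paths_and_mu}, and obtain $\mu_i(u)=\mu_j(u)$ from the path bijection through $p$; in the converse you deduce from $\mu_0=0$ and the leaf entries that the subnetwork below the witnessing node is a reticulation-free binary tree with exactly two leaves, hence a cherry. The only cosmetic difference is that you anticipate needing condition (2) to rule out degenerate configurations in the converse, whereas, as in the paper, condition (1) alone already forces the cherry structure.
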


\begin{proof}
    Assume that $(i,j)$ is a cherry of $N$, and let $p$ be its common parent. Clearly, $p\in V_T$ and $\mu(p)=\delta_{i,j}$ belongs to $\bmu(N)$. 

    Assume that $p'\neq p$ is a tree node with $\mu(p')=\delta_{i,j}$. Since all the paths from the root $\rho$ to $j$ must pass through $p$, it follows that $p'$ is either a descendant or an ascendant of $p$. From Lemma~\ref{paths_and_mu} it follows that $p=p'$, which is a contradiction.

    For each tree node $u$, the set of paths in $M(u,i)$ (resp. $M(u,j)$) is in bijection with the set of paths in $M(u,p)$, by appending and removing the arc $\arc pi$ (resp. $\arc pj$); hence $\mu_i(u)=m(u,i)=m(u,p)=m(u,j)=\mu_j(u)$. Finally, if $u$ is a leaf different from $i,j$, then clearly $\mu_i(u)=\mu_j(u)=0$.

    Conversely, assume that $(i,j)$ is a cherry of $\bmu(N)$, and let $p\in V_T$ with $\mu(p)=\delta_{i,j}$. Since $\delta_0(p)=0$, there is no reticulation reachable from $p$, which means that the subnetwork rooted at $p$ must be formed by tree nodes and hence it is a tree without elementary nodes whose only leaves are $i,j$ (since for each other $k\in[n]$, $\delta_k(p)=0$). Hence, this subtree must be a cherry.
\end{proof}

We say that $(i,j)$ is a \defit{reticulated-cherry} of $\bmu$ if:
\begin{enumerate}
    \item $\delta_{0,i,j}\in\bmu$ (with multiplicity $1$ in the multiset) and
    \item for each $\mu=(\mu_0,\mu_1,\dots,\mu_n)\in \bmu\setminus\{\delta_i,\delta_j\}$, we have that $\mu_0\ge\mu_i\ge\mu_j$.
%    \TMC{Com vaig dir a la versió anterior, en realitat també tenim que existeix algun node en què la desigualtat $\mu_i\ge\mu_j$ és estricta, i crec que s'hauria de dir. Primer perquè es pot demostrar, i segon perquè ho emprarem a l'hora reconstruir la xarxa i veure si $i$ és la fulla filla de reticulació o ho és $j$.}
%    \GCJ{No veig que s'hagi de posar a la definició de reticulated-cherry a la $\bmu$. Es deduirà en el cas que aquesta $\bmu$ sigui la $\bmu$ d'una xarxa (orchard o no), però no en general. } \TMC{Potser com a comentari després de la definició: Notice that, in the case in which $\bmu$... O bé, com volgueu.}
\end{enumerate}

\begin{prop}
    \label{ret-cherry-in-terms-of-mu}
    Let $i,j\in[n]$, then $(i,j)$ is a reticulated-cherry of $N$ if, and only if, it is a reticulated-cherry of $\bmu(N)$.
\end{prop}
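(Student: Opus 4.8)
The plan is to follow closely the template of Proposition~\ref{cherry-in-terms-of-mu}, splitting into the two implications and exploiting throughout that each leaf has a unique parent, so that path counts to a leaf equal path counts to its parent.

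For the direct implication, suppose $(i,j)$ is a reticulated-cherry of $N$, so that $p_i$ is a reticulation, $p_j$ a tree node, and $p_j$ a parent of $p_i$. Using Proposition~\ref{formula-for-mu} I would first compute $\mu(p_i)=\delta_{0,i}$ and then $\mu(p_j)=\delta_{0,i,j}$, giving condition~(1). For the inequalities in condition~(2) I would argue that they in fact hold at \emph{every} node $u$: since $i,j$ have unique parents, appending the last arc gives $m(u,i)=m(u,p_i)$ and $m(u,j)=m(u,p_j)$; as $p_i$ has the two parents $p_j$ and some other node $q$, one has $m(u,p_i)=m(u,p_j)+m(u,q)\ge m(u,p_j)$, whence $\mu_i(u)\ge\mu_j(u)$, and since $p_i\in V_H$ also $\mu_0(u)\ge m(u,p_i)=\mu_i(u)$. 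The only nodes where these break are the leaves $i,j$ themselves, with vectors $\delta_i,\delta_j$, which are exactly the vectors removed in condition~(2); a short argument (a tree node cannot have a vector $\delta_k$, since then one of its children would have zero $\mu$-vector, impossible as every node reaches a leaf) shows $\delta_i,\delta_j$ occur with multiplicity one, so removing them removes precisely the offending nodes. The multiplicity-one claim for $\delta_{0,i,j}$ follows from Lemma~\ref{paths_and_mu}: any other tree node $u$ with $\mu_j(u)=1$ is a proper ancestor of $p_j$, and distinct tree nodes on a common path have distinct $\mu$-vectors.

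For the converse, let $p\in V_T$ be the unique node with $\mu(p)=\delta_{0,i,j}$. From $\mu_0(p)=1$ I would deduce that exactly one reticulation $h$ is reachable from $p$ and that $m(p,h)=1$; the latter forces $h$ to have in-degree one in the subgraph $T$ of nodes reachable from $p$, while every other node of $T$ also has in-degree one (its unique parent lies on a path from $p$). Hence $T$ is a rooted tree with leaf set $\{i,j\}$ (as $\mu_k(p)=0$ for $k\neq i,j$) whose only node of out-degree different from $2$ is $h$. Since in any tree the number of leaves equals one plus the number of its out-degree-$2$ nodes, $T$ has a single branching node; as $p$ is not a leaf (its zeroth entry is nonzero) it is a tree node and must be that branching node, so the two root-to-leaf paths of $T$ carry between them exactly one extra node, namely $h$. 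Thus one child of $p$ is a leaf and the other is $h$, whose single child is the remaining leaf.

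The hard part is the final step: the analysis so far only shows that $\{i,j\}$ forms a reticulated-cherry in \emph{some} orientation, and I must invoke condition~(2) to rule out the swapped configuration in which $p_i=p$ is the tree node and $p_j=h$ is the reticulation. In that wrong orientation every path to $j$ factors through $h$ and its two parents $p$ and a second parent $q_2$, giving $\mu_j(u)=\mu_i(u)+m(u,q_2)$ for all $u$. Taking $u=r$, the child of the root (necessarily a tree node in a non-trivial network, and distinct from $i,j$), every non-root node is reachable from $r$, so $m(r,q_2)\ge 1$ and therefore $\mu_j(r)>\mu_i(r)$, contradicting $\mu_i\ge\mu_j$ in condition~(2). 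This pins down the correct orientation: $p_i=h$ is a reticulation, $p_j=p$ is a tree node, and $p_j$ is a parent of $p_i$, i.e. $(i,j)$ is a reticulated-cherry of $N$.
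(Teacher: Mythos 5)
Your proof is correct and follows essentially the same route as the paper's: compute $\mu(p_j)=\delta_{0,i,j}$ and establish the inequalities by decomposing paths through $p_i$'s two parents, get uniqueness of $\delta_{0,i,j}$ from Lemma~\ref{paths_and_mu}, and in the converse recover the local ``cherry with one reticulation inserted'' structure from $\mu(p)=\delta_{0,i,j}$ and rule out the swapped orientation by evaluating $\mu_i\ge\mu_j$ at the child of the root. Your treatment of the converse's structural step (the leaf-counting argument) and of the multiplicities of $\delta_i,\delta_j$ is somewhat more explicit than the paper's, but the underlying argument is the same.
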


\begin{proof}
    Assume that $(i,j)$ is a reticulated-cherry of $N$, and let $p,q$ be the respective parents of $i,j$, which means that $p$ is a reticulation, $q$ is a tree node, and there is an arc $\arc qp$. It is clear that $\mu(q)=\delta_{0,i,j}$. 
    
    Assume that $q'\neq q$ is a tree node with $\mu(q)=\delta_{0,i,j}$. Since all the paths from $\rho$ to $j$ must pass through $q$, it follows that $q'$ is either a descendant or an ascendant of $q$. From Lemma~\ref{paths_and_mu} it follows that $q=q'$, which is a contradiction.
    
    Let $u$ be any tree node of $N$. First, note that the paths in $M(u,i)$ are in bijection with the paths in $M(u,p)$, by appending and removing the arc $\arc pi$, hence $m(u,i)=m(u,p)$. Since $p$ is a reticulation, it follows that $\mu_0(u)\ge m(u,p)=m(u,i)=\mu_i(u)$. Second, note that all paths in $M(u,j)$ must pass through $q$, while those in $M(u,i)$ can either pass through $q$ or through the other parent of $p$ different from $q$. From this it follows that $\mu_i(u)=m(u,i)\ge m(u,j)=\mu_j(u)$ and, finally, the condition $\mu_0(u)\ge\mu_i(u)\ge\mu_j(u)$ holds.

    If $u\neq i,j$ is a leaf of $N$, it is clear that $\mu_0(u)=\mu_i(u)=\mu_j(u)=0$ and also the condition $\mu_0(u)\ge\mu_i(u)\ge\mu_j(u)$ holds.

    Conversely, assume that $(i,j)$ is a reticulated-cherry of $\bmu(N)$, and let ${u}\in V_T$ be such that $\mu({u})=\delta_{0,i,j}$. Since  the leaves reachable from $u$ are $i,j$, the subnetwork of $N$ rooted at $u$ must be a tree, with elementary nodes that are reticulations in $N$. Since $\mu_0(u)=1$, there is a single path from $u$ to a reticulation, which means that there can only be a reticulation reachable from $u$. Then, the subnetwork rooted at $u$ must be a cherry with a single elementary node inserted above one of its leaves. Let $p,q$ be the respective parents of $i,j$. Then, either $u=q$ and there exists an arc $\arc qp$, or $u=p$ and the arc is $\arc pq$. In the second case, let $v$ the single child of the root of $N$, which necessarily belongs to $V_T$; all the paths in $M(v,i)$ must pass through $p$, while those ending in $j$ can pass through $p$ or through the other parent of $q$ different from $p$ (and there is at least one of these paths); from this it follows that $\mu_j(v)>\mu_i(v)$, against the hypothesis. Then, there exists an arc $\arc qp$ and $(i,j)$ is a reticulated-cherry.
\end{proof}

%Note that the proof of the proposition implies that, in case that $\bmu$ is the $\mu$-representation of a network, and $(i,j)$ is a reticulated cherry of $N$, then there exists at least one node $u$ such that $\mu_i(u)>\mu_j(u)$.
%\GCJ{Tampoc tenc gaire clar si posar-ho aquí o no...} \JCP{si no ho empram per res més, jo ho treuria}

\begin{coro}
    \label{cherry-and-ret-cherry-incompatible}
    If $\bmu$ is the $\mu$-representation of an orchard network, then the conditions that $(i,j)$ is a cherry and a reticulated-cherry of $\bmu$ are mutually exclusive.
\end{coro}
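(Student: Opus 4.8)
The plan is to obtain the statement directly from the two preceding propositions. Since $\bmu=\bmu(N)$ for some orchard network $N$, Proposition~\ref{cherry-in-terms-of-mu} gives that ``$(i,j)$ is a cherry of $\bmu$'' is equivalent to ``$(i,j)$ is a cherry of $N$'', and Proposition~\ref{ret-cherry-in-terms-of-mu} gives the analogous equivalence for reticulated-cherries. Thus it suffices to prove that the two conditions cannot hold simultaneously \emph{on the network} $N$. But this is exactly the incompatibility already noted in Section~\ref{sec:orchard}: a cherry $(i,j)$ forces $p_i=p_j$, whereas a reticulated-cherry forces $p_i$ to be a reticulation and $p_j$ to be a tree node; since a single node cannot be both a tree node and a reticulation (and, moreover, in the reticulated-cherry case $p_j$ is a parent of $p_i$, so $p_i\neq p_j$), no pair $(i,j)$ can be both. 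Invoking the propositions then transports this incompatibility from $N$ to $\bmu$.

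It is worth noting that the same conclusion can be reached purely at the level of the multiset, bypassing $N$ entirely, which makes the result essentially immediate. Suppose $(i,j)$ is a cherry of $\bmu$; by definition this requires $\delta_{i,j}\in\bmu$. Since $i\neq j$, the vector $\delta_{i,j}$ has two nonzero entries and hence differs from both $\delta_i$ and $\delta_j$, so $\delta_{i,j}\in\bmu\setminus\{\delta_i,\delta_j\}$. If $(i,j)$ were simultaneously a reticulated-cherry of $\bmu$, then the defining inequality $\mu_0\ge\mu_i$ would have to hold for $\mu=\delta_{i,j}$; but its $0$-th entry is $0$ while its $i$-th entry is $1$, so this reads $0\ge 1$, a contradiction. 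Hence the two conditions are mutually exclusive.

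I expect there to be no real obstacle here, since all the substantive work has already been carried out in Propositions~\ref{cherry-in-terms-of-mu} and~\ref{ret-cherry-in-terms-of-mu}. The only points requiring a little care are: first, to use that $\delta_{i,j}$ is genuinely distinct from $\delta_i$ and $\delta_j$ (which needs $i\neq j$), so that the reticulated-cherry inequality really does apply to it in the direct argument; and second, if one follows the first route, to keep track that both equivalences refer to the \emph{same} realizing network $N$, so that the single structural incompatibility in $N$ governs both conditions on $\bmu$.
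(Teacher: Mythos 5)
Your first route is exactly the paper's (implicit) argument: the corollary is stated without proof precisely because it follows from combining Propositions~\ref{cherry-in-terms-of-mu} and~\ref{ret-cherry-in-terms-of-mu} with the structural incompatibility of cherries and reticulated-cherries already observed in Section~\ref{sec:orchard}, and you correctly note the one point of care, namely that both equivalences refer to the same realizing network $N$. Your second, direct argument is a genuinely different and arguably cleaner route: since $i\neq j$ forces $\delta_{i,j}\notin\{\delta_i,\delta_j\}$, the reticulated-cherry inequality $\mu_0\ge\mu_i$ applied to $\mu=\delta_{i,j}$ yields $0\ge1$. This version needs neither the two propositions nor the orchard hypothesis, so it establishes the incompatibility for an arbitrary multiset of vectors satisfying the two definitions; that extra generality is not needed in the paper but costs nothing. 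Both arguments are correct.
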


We say that $(i,j)$ is \defit{reducible} in $\bmu$ if it is either a cherry or a reticulated-cherry. With this notation, Propositions~\ref{cherry-in-terms-of-mu} and~\ref{ret-cherry-in-terms-of-mu} can be written as follows.

\begin{coro}
\label{pair-reducible-in-mu-iff-in-net}
    Let $i,j\in[n]$. Then, $(i,j)$ is reducible in $N$ if, and only if, it is reducible in $\bmu(N)$.
\end{coro}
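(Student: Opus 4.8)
The plan is to obtain this corollary as an immediate consequence of the two preceding propositions, since it merely repackages them using the common terminology of reducibility. By definition, $(i,j)$ is reducible in $N$ precisely when it is a cherry or a reticulated-cherry of $N$, and likewise $(i,j)$ is reducible in $\bmu(N)$ precisely when it is a cherry or a reticulated-cherry of $\bmu(N)$. Hence the whole task reduces to combining two already-established biconditionals by taking their disjunction.

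Concretely, I would argue as follows. Suppose first that $(i,j)$ is reducible in $N$. Then either it is a cherry of $N$, in which case Proposition~\ref{cherry-in-terms-of-mu} gives that it is a cherry of $\bmu(N)$, or it is a reticulated-cherry of $N$, in which case Proposition~\ref{ret-cherry-in-terms-of-mu} gives that it is a reticulated-cherry of $\bmu(N)$; in either case $(i,j)$ is reducible in $\bmu(N)$. The converse is symmetric: if $(i,j)$ is reducible in $\bmu(N)$ then it is a cherry or a reticulated-cherry of $\bmu(N)$, and applying the same two propositions in the reverse direction yields that it is a cherry or a reticulated-cherry of $N$, so it is reducible in $N$.

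Because the argument is a pure disjunction of the two proved biconditionals, there is no genuine obstacle in the corollary itself; all the real work has already been done in Propositions~\ref{cherry-in-terms-of-mu} and~\ref{ret-cherry-in-terms-of-mu}. The only point worth bearing in mind is that the notions of cherry and reticulated-cherry are mutually exclusive both on the network side (as noted when the reductions were defined) and on the $\mu$-side (this is exactly Corollary~\ref{cherry-and-ret-cherry-incompatible}), so the word \emph{reducible} is unambiguous and the two disjunctions partition the reducible pairs cleanly. This is what makes the equivalence a clean translation of the network notion into a purely combinatorial condition on $\bmu(N)$, rather than one that might conflate the two cases.
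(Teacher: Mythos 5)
Your proposal is correct and matches the paper exactly: the corollary is stated there as an immediate restatement of Propositions~\ref{cherry-in-terms-of-mu} and~\ref{ret-cherry-in-terms-of-mu} once \emph{reducible} is defined as the disjunction of the two cases. Your additional remark on mutual exclusivity is harmless but not needed for the equivalence itself.
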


\subsection*{Reduction of $\mu$-representations}

We now extend the concept of reduction, from networks to (multi)sets of vectors. As before, this extension is compatible with the computation of the $\mu$-representation of a network.

Let $\bmu$ be a (multi)set of vectors of length $n+1$ whose entries are non-negative integers, and assume that $(i,j)$ is a cherry of $\bmu$. We define the \defit{cherry-reduction} of $\bmu$ with respect to $(i,j)$ as the (multi)set obtained from $\bmu$ by means of the following operations:
    \begin{itemize}
        \item Remove $\delta_i$ and $\delta_{i,j}$ from $\bmu$;
        \item For every other $\mu=(\mu_0,\mu_1,\dots,\mu_n)\in\bmu$, set $\mu_i=0$.
    \end{itemize}
    More formally, it is the (multi)set:
    \[\{ \mu-\mu_i\delta_i\mid \mu=(\mu_0,\mu_1,\dots,\mu_n)\in\bmu\setminus\{\delta_i,\delta_{i,j}\}\}.\]

Let $\bmu$ be a (multi)set of vectors of length $n+1$ whose entries are non-negative integers, and assume that $(i,j)$ is a reticulated-cherry of $\bmu$. We define the \defit{reticulated-cherry-reduction} of $\bmu$ with respect to $(i,j)$ as the (multi)set obtained from $\bmu$ by means of the following operations:
    \begin{itemize}
        \item Remove $\delta_{0,i,j}$ from $\bmu$;
        \item For every other $\mu=(\mu_0,\mu_1,\dots,\mu_n)\in\bmu$, subtract $\mu_i$ to $\mu_0$, and $\mu_j$ to $\mu_i$.
    \end{itemize}
    More formally, it is the (multi)set:
    \[\{ \mu-\mu_i\delta_0-\mu_j\delta_i\mid \mu=(\mu_0,\mu_1,\dots,\mu_n)\in\bmu\setminus\{\delta_{0,i,j}\}\}.
        \]

% Let $\bmu$ be a (multi)set of vectors of length $n+1$ with entries non-negative integers, and let
% $(i,j)$ be reducible in $\bmu$. We define the \defit{reduction} of $\bmu$ with respect to $(i,j)$ as the (multi)set obtained from $\bmu$ by means of the following operations:
% \begin{itemize}
%     \item If $(i,j)$ is a cherry of $\bmu$, then:
%     \begin{itemize}
%         \item Remove $\delta_{i,j}$ from $\bmu$;
%         \item For every other $\mu=(\mu_0,\mu_1,\dots,\mu_n)\in\bmu$, set $\mu_i=0$.
%     \end{itemize}
%     More formally, the output of this operation is
%     \[\{ \mu-\mu_i\delta_i\mid \mu\in\bmu\setminus\{\delta_{i,j}\}\}.\]
%     \item If $(i,j)$ is a reticulated-cherry of $\bmu$, then:
%     \begin{itemize}
%         \item Remove $\delta_{0,i,j}$ from $\bmu$;
%         \item For every other $\mu=(\mu_0,\mu_1,\dots,\mu_n)\in\bmu$, subtract $\mu_i$ to $\mu_0$ and to $\mu_j$.
%     \end{itemize}
%     More formally, the output of this operation is
%     \[\{ \mu-\mu_i\delta_0-\mu_i\delta_j\mid \mu\in\bmu\setminus\{\delta_{0,i,j}\}\}.
%         \]
% \end{itemize}

In case that the pair $(i,j)$ is not simultaneously a cherry and also a reticulated-cherry in $\bmu$ (and recall that, thanks to Corollary~\ref{cherry-and-ret-cherry-incompatible}, this will always be the case when dealing with the $\mu$-representation of an orchard network) we shall simply call it the \defit{reduction} of $\bmu$ with respect to $(i,j)$ and denote it by $\bmu^{(i,j)}$.
Note also that the process of modifying the $\mu$-vectors could in principle lead to repeated elements (even when there are no repeated elements in $\bmu$), but since we consider $\bmu$ as a multiset, we can take into account these repetitions. As noted before, it will never happen for the cases we shall be interested in.

% We shall denote by $\bmu^{(i,j)}$ the reduction of $\bmu$ with respect to $(i,j)$.

\begin{prop}
    \label{mu-of-reductions}
    Let $(i,j)$ be a reducible pair in a network $N$. Then,
    \[\bmu(N^{(i,j)})=\bmu(N)^{(i,j)}.\]
\end{prop}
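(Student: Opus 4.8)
The plan is to prove the two cases of a reducible pair separately and, in each, to establish the stated equality as an identity of multisets indexed by the tree nodes. The guiding principle is that the reduction $N\mapsto N^{(i,j)}$ only alters the graph in a small neighbourhood of the (reticulated-)cherry, so that path counts to any leaf or reticulation not lying strictly below the modified region are unaffected; the vectors that do change are controlled by the elementary ``last-arc'' recurrence $m_N(u,\ell)=\sum_a m_N(u,a)$, the sum ranging over the parents $a$ of $\ell$ and valid whenever $u\neq\ell$. In both cases I would first identify the tree node(s) of $N$ that disappear and check that their $\mu$-vectors are exactly the vectors deleted by the corresponding reduction of $\bmu$; then I would verify, node by node, that every surviving tree node keeps its identity and that its $\mu$-vector is transformed precisely as prescribed. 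Uniqueness of those deleted vectors (their multiplicity $1$) is already furnished by Propositions~\ref{cherry-in-terms-of-mu} and~\ref{ret-cherry-in-terms-of-mu}, so the comparison can be made purely at the level of multisets.

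For the cherry case, let $p$ be the common parent of $i$ and $j$. Then $p$ is a tree node with $\mu^N(p)=\delta_{i,j}$ and $\mu^N(i)=\delta_i$, which are the two vectors removed by the cherry-reduction; the surviving tree nodes are $V_T(N)\setminus\{i,p\}$. Since the only descendants of $p$ are $i$ and $j$ and no reticulation lies below $p$, every path from a surviving node $u$ to a reticulation or to a leaf $k\neq i,j$ avoids the modified region, so $\mu_0$ and all $\mu_k$ with $k\neq i,j$ are preserved; the identity $m_{N^{(i,j)}}(u,j)=m_N(u,p)=m_N(u,j)$ preserves $\mu_j$, while the deletion of $i$ forces $\mu_i\mapsto 0$. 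The surviving leaf $j$ keeps its vector $\delta_j$. This is exactly the operation $\mu\mapsto\mu-\mu_i\delta_i$, so the cherry case is settled.

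For the reticulated-cherry case, write $p$ for the reticulation (parents $q,q'$, child $i$) and $q$ for the tree node (parent $r$, children $p,j$). The single deleted vector is $\mu^N(q)=\delta_{0,i,j}$; the removed nodes are $p$ and $q$, of which only $q$ lies in $V_T$, so the surviving tree nodes are $V_T(N)\setminus\{q\}$, still including the leaves $i$ and $j$. The key geometric observations are that the descendants of $p$ are exactly $\{p,i\}$ and those of $q$ are exactly $\{q,p,i,j\}$; hence for any reticulation $h\neq p$ and any leaf $k\neq i,j$ no path meets $p$ or $q$, so these counts are preserved, giving $\mu_k$ unchanged for $k\neq 0,i$ and isolating the effect on $\mu_0$ to the disappearance of $p$. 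For a surviving \emph{internal} tree node $u$ (so $u\neq q,i,j$) the recurrences then yield $m_{N^{(i,j)}}(u,i)=m_N(u,q')=m_N(u,p)-m_N(u,q)=\mu_i^N(u)-\mu_j^N(u)$ and $\mu_0^{N^{(i,j)}}(u)=\mu_0^N(u)-m_N(u,p)=\mu_0^N(u)-\mu_i^N(u)$, while $\mu_j$ is unchanged; these are precisely the substitutions $\mu_0\mapsto\mu_0-\mu_i$ and $\mu_i\mapsto\mu_i-\mu_j$. The two surviving leaves $i,j$ must be treated on their own, since the last-arc recurrence degenerates at $u=i$ and $u=j$: one checks directly that they remain leaves of $N^{(i,j)}$, with $\mu$-vectors still $\delta_i$ and $\delta_j$, so they are left unchanged by the reduction (consistently with the exclusion of $\delta_i,\delta_j$ in the second clause of the definition of a reticulated-cherry of $\bmu$).

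I expect the delicate points to be exactly these last two. The first is the bookkeeping for $\mu_0$ in the reticulated-cherry case: one must simultaneously confirm that every reticulation other than $p$ contributes the same number of paths in $N$ and in $N^{(i,j)}$---so that nothing spurious is created or lost when $p$ is deleted and $q$ is simplified---and that the vanishing reticulation $p$ accounted for exactly $m_N(u,p)=\mu_i^N(u)$ paths; both hinge on pinning down the descendant sets of $p$ and $q$, which I would state as an explicit preliminary. The second is the separate handling of the leaves $i$ and $j$, where the generic subtraction formula would produce negative entries and must instead be read as leaving $\delta_i,\delta_j$ fixed, matching the fact that both leaves persist in $N^{(i,j)}$.
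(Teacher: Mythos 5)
Your proof follows essentially the same route as the paper's: identify the tree node(s) that disappear under the reduction ($p$ for a cherry, $q$ for a reticulated-cherry), check that their $\mu$-vectors are exactly the ones deleted from the multiset, and track the surviving nodes' path counts via explicit bijections of paths (paths to $i$ through $q$ versus through the other parent of $p$, the disappearance of the single reticulation $p$, the identity $m_N(u,p)=m_N(u,i)$, etc.). The cherry case and the generic part of the reticulated-cherry case are correct and match the paper's argument step for step.

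The one place you diverge is the treatment of the leaves $i$ and $j$ in the reticulated-cherry case, and there you are \emph{more} careful than the paper --- but your resolution does not match the paper's literal definition. The reticulated-cherry-reduction of $\bmu$ is defined as $\{\mu-\mu_i\delta_0-\mu_j\delta_i\mid \mu\in\bmu\setminus\{\delta_{0,i,j}\}\}$, with no carve-out for $\delta_i$ and $\delta_j$; applied literally it sends $\delta_i\mapsto\delta_i-\delta_0$ and $\delta_j\mapsto\delta_j-\delta_i$, both of which acquire a negative entry, whereas $\bmu(N^{(i,j)})$ contains $\delta_i$ and $\delta_j$ unchanged since both leaves survive. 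You assert that the reduction ``leaves $\delta_i,\delta_j$ fixed,'' citing the exclusion of these two vectors in the second clause of the definition of a reticulated-cherry of $\bmu$ --- but that exclusion appears only in the reducibility condition, not in the definition of the reduction itself. So, as written, your argument proves the proposition for an amended definition of $\bmu^{(i,j)}$ (the one that is surely intended, and the one needed for the later iteration results of the paper), not for the definition as literally stated; make that amendment explicit and the proof is complete. Your diagnosis that the last-arc bijections $m(u,p)=m(u,i)$ and $m(u,q)=m(u,j)$ degenerate precisely at $u=i$ and $u=j$ is exactly right, and it is a subtlety that the paper's own proof passes over silently by quantifying over ``any other node $u$'' without separating out the leaves.
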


\begin{proof}
    Let us first assume that $(i,j)$ is a cherry in $N$, which implies that $(i,j)$ is a cherry  in $\bmu(N)$. For simplicity, we write $\tilde N=N^{(i,j)}$.

    Note that $p$, the parent of both $i$ and $j$, belongs to $V_T$ and is the only node that becomes elementary when we suppress the node $i$. Clearly $\mu(p)=\delta_{i,j}$ will not contribute to $\bmu(\tilde N)$. For any other node $u$ of $N$, we have that $u$ also belongs to $\tilde N$. Moreover, $\mu_k^{\tilde N}(u)= m_{\tilde N}(u,k)=m_N(u,k)\mu_k^{N}(u)$ 
    (for every $k\in[n]\setminus\{i\}$) since the paths in $M_N(u,k)$ are in bijection with the paths in $M_{\tilde N}(u,k)$, while $\mu_i^{\tilde N}(u)=m_{\tilde N}(u,i)=0$, since $i$ is no longer a leaf of $\tilde N$.  
  Therefore, the $\mu$-representation of $\tilde N$ coincides with the reduction of the  $\mu$-representation of $N$ with respect to $(i,j)$.
    
    Let us now assume that $(i,j)$ is a reticulated-cherry in $N$, which implies that $(i,j)$ is a reticulated-cherry  in $\bmu(N)$. As before, we write $\tilde N=N^{(i,j)}$.

    Note that $p,q$, the respective parents of $i,j$ become elementary when the arc that connects them is removed. However, $p$ is a reticulation in $N$, and hence its $\mu$-vector does not contribute to $\bmu(N)$. Then, $q$, which belongs to $V_T$ and whose $\mu$-vector in $N$ is $\delta_{0,i,j}$, is the only node whose contribution to $\bmu(N)$ disappears in $\bmu(\tilde N)$. 
    
    For any other node $u$ of $N$, we have that $u$ also belongs to $\tilde N$. The paths in $M_N(u,i)$ can be divided between those that pass through $q$ (which are in bijection with $M_N(u,j)$) and those that pass through the other parent of $p$. The former will no longer be paths in $\tilde N$ when we remove the arc $qp$, while the latter will be in bijection with the paths in $M_{\tilde N}(u,i)$. Hence, we get $\mu^{\tilde N}_{i}(u)=m_{\tilde N}(u,i)=m_N(u,i)-m_N(u,j)=\mu^N_i(u)-\mu^N_j(u)$.
    Note also that $p$ is the only reticulation in $N$ that disappears in $\tilde N$, and the paths in $M_N(u,p)$ are in bijection with those in $M_N(u,i)$. Hence, $\mu^{\tilde N}_{0}(u)=\mu^N_{0}(u)-m_N(u,p)=\mu^N_{0}(u)-\mu^N_{i}(u)$. Therefore, and as in the previous case, the operations we have to perform to obtain $\bmu(\tilde N)$ from $\bmu(N)$ are exactly the same than those we have to do to compute $\bmu(N)^{(i,j)}$.
    %    Finally, in this case we also get that $\bmu(N^{(i,j)})=\bmu(N)^{(i,j)}$.
\end{proof}

\section{Classification of orchard networks using $\mu$-representations}
\label{sec:mu-for-orchard}

In the last section, we have introduced the $\mu$-rep\-re\-sentation associated to a network, and the reductions that can eliminate cherries and reticulated cherries. Only accidentally we have referred to the case of orchard networks. In this section we show how the $\mu$-representation can be used to characterize orchard networks up to isomorphism and to define a sound distance measure among them.

\subsection*{Unicity of $\mu$-vectors in orchard networks}

We have already announced (without proof) that, although the $\mu$-rep\-re\-sentation of a network has been defined as a multiset, in the case of orchard networks, there are no repeated $\mu$-vectors and hence it can be considered a set. We now prove this result.
%\GCJ{Double check que es pot posar aquí i no depèn de res posterior} \TMC{Crec que se pot posar aquí: empram l'equivalència entre cherry reductions de la mu i de la xarxa, i definicions anteriors.}

\begin{prop}
    Let $N$ be an orchard network, and $u,v\in V_T(N)$. If $\mu(u)=\mu(v)$, then $u=v$.
\end{prop}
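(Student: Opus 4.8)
The plan is to prove the statement by induction on the number of nodes of $N$ (equivalently, on the length of a complete reduction sequence of $N$). In the base case $N=I_i$ we have $V_T(N)=\{i\}$, a singleton, so the claim is vacuous. For the inductive step I would take the first pair $(i,j)=s_1$ of a complete reduction sequence $S=s_1\cdots s_k$ of $N$ (which is nonempty whenever $N$ is not trivial) and set $N'=N^{(i,j)}$; then $N'$ is again orchard, being reducible to a trivial network by $s_2\cdots s_k$, and it has strictly fewer nodes, so the induction hypothesis applies to it. By Proposition~\ref{mu-of-reductions} we have $\bmu(N')=\bmu(N)^{(i,j)}$, and this identity is the bridge that lets me descend from $N$ to $N'$.

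Assume, for contradiction, that there exist distinct $u,v\in V_T(N)$ with $\mu(u)=\mu(v)$. (Note that, by Lemma~\ref{paths_and_mu}, such $u,v$ are necessarily incomparable, so no help comes from the ancestor relation; the whole content lies in the incomparable case.) The first thing to check is that neither $u$ nor $v$ is deleted by the reduction. The tree nodes removed are the leaf $i$ (with $\mu=\delta_i$) and the common parent $p$ (with $\mu=\delta_{i,j}$) in the cherry case, and the tree node $q$ (with $\mu=\delta_{0,i,j}$) in the reticulated-cherry case, the remaining deleted node there being the reticulation $p\in V_H\setminus V_T$. Each of these distinguished vectors has multiplicity exactly one in $\bmu(N)$: for $\delta_{i,j}$ and $\delta_{0,i,j}$ this is condition~(1) in the definitions of cherry and reticulated-cherry of $\bmu$, valid here by Propositions~\ref{cherry-in-terms-of-mu} and~\ref{ret-cherry-in-terms-of-mu}; for $\delta_i$ it follows because a tree node has two children whose nonzero $\mu$-vectors sum to its own, and such a sum can never equal $\delta_i$, so the leaf $i$ is the unique node realising $\delta_i$. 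Hence if, say, $u$ were a deleted node, then $v$ would be a second node carrying the same multiplicity-one vector, forcing $u=v$ and contradicting $u\neq v$.

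Therefore $u$ and $v$ both survive the reduction and remain distinct nodes of $V_T(N')$. The reduction operation replaces each surviving vector by the image of a fixed rule that depends only on that vector's own entries (subtracting $\mu_i\delta_i$ in the cherry case, or $\mu_i\delta_0+\mu_j\delta_i$ in the reticulated-cherry case). Since $\mu^N(u)=\mu^N(v)$, these two nodes receive the same reduced vector, i.e.\ $\mu^{N'}(u)=\mu^{N'}(v)$. Applying the induction hypothesis to the smaller orchard network $N'$ gives $u=v$, the desired contradiction, which closes the induction.

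I expect the main obstacle to be the bookkeeping around the deleted nodes: one must verify that the only $V_T$-nodes removed by a reduction carry precisely the vectors $\delta_i$, $\delta_{i,j}$, $\delta_{0,i,j}$, and that these really are of multiplicity one in $\bmu(N)$. Once this pins both $u$ and $v$ outside the deleted set, the fact that the reduction acts identically on identical vectors makes the inductive descent immediate; the delicate part is simply making sure no repeated pair can be concealed among the nodes that the reduction destroys.
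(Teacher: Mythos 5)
Your proof is correct and is essentially the paper's argument: both track the pair $u,v$ through a complete reduction sequence, use Proposition~\ref{mu-of-reductions} to keep their $\mu$-vectors equal after each step, and derive the contradiction from the multiplicity-one requirement on $\delta_{i,j}$ or $\delta_{0,i,j}$ in the definition of a reducible pair (with the leaf case $\delta_i$ handled separately, as the paper also does). The only difference is presentational: you package the descent as an induction on the length of the sequence, whereas the paper scans forward to the first step at which $u$ or $v$ would be deleted and obtains the contradiction there.
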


\begin{proof}
    The result is clear if either $u$ or $v$ are leaves.
    
    Let $S$ be a complete reducible sequence for $N$, and consider the intermediate networks $N_k=(\dots(N^{s_1})^{s_2}\dots)^{s_k}$. At some point in this reduction process, each of $u$ and $v$ are eventually simplified because they become elementary nodes. Let us assume that $u,v$ belong to $N_{k}$, but $u$ does not belong to $N_{k+1}$. If we let $(i,j)=s_{k+1}$, then either $\mu^{N_{k}}(u)=\delta_{i,j}$ or $\mu^{N_{k}}(u)=\delta_{0,i,j}$, depending on whether $s_{k+1}$ is a cherry or a reticulated-cherry. Now, since $\bmu({N_k})=(\dots(\bmu(N)^{s_1})^{s_2}\dots)^{s_k}$ we have that $\mu^{N_{k}}(v)=\mu^{N_{k}}(u)$. If $u\neq v$, then the multiplicity of $\delta_{i,j}$ (or $\delta_{0,i,j}$)  in $\bmu({N_{k}})$ would be at least $2$, against the hypothesis that $s_{k+1}$ is a reducible pair for $\bmu({N_{k}})$.
\end{proof}

\begin{coro}
    The $\mu$-representation $\bmu(N)$ of an orchard network $N$ is a set.
\end{coro}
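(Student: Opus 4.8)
The plan is to obtain this corollary as an immediate formal consequence of the preceding Proposition. Recall that $\bmu(N)$ was defined as the multiset $\{\mu(u)\mid u\in V_T\}$, with the explicit convention that a vector $\mu$ occurs with multiplicity equal to the number of tree nodes $u\in V_T(N)$ satisfying $\mu(u)=\mu$. To assert that $\bmu(N)$ is a \emph{set} is therefore exactly to assert that all of these multiplicities equal one, i.e.\ that no vector is carried by two distinct tree nodes.

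The preceding Proposition states precisely that if $u,v\in V_T(N)$ and $\mu(u)=\mu(v)$, then $u=v$; equivalently, the assignment $u\mapsto \mu(u)$ is injective on $V_T(N)$. First I would invoke this injectivity to conclude that every vector in the image is attained by a unique tree node, so that each element of the multiset $\bmu(N)$ occurs with multiplicity exactly one. A multiset all of whose defining multiplicities are one has no repeated elements and can be identified with its underlying support, which is a genuine set. This is the whole argument.

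The main obstacle here is essentially nonexistent, since all the substance has already been discharged in the Proposition, whose proof relies on the reducibility of orchard networks together with the compatibility between network reductions and $\mu$-representation reductions established in Proposition~\ref{mu-of-reductions}. The corollary itself demands only the purely set-theoretic observation that injectivity of $u\mapsto\mu(u)$ forces all multiplicities to be one; no new combinatorial input is needed.
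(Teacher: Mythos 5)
Your argument is correct and matches the paper's treatment: the corollary is stated there without proof, as an immediate consequence of the preceding proposition, which is exactly the injectivity of $u\mapsto\mu(u)$ on $V_T(N)$ that you invoke. Nothing further is needed.
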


\subsection*{Reducible sequences and isomorphism of orchard networks}

We now show how the definitions and results introduced so far allow us to prove that the $\mu$-representation of a network classifies with unicity its isomorphism class. To do so, we translate into $\mu$-representations the concepts of complete reducible sequences of networks, and use the classification result of Theorem~\ref{same-crs-implies-iso}.

Given a sequence of pairs of integers $S=s_1\cdots s_k$, with $s_t=(i_t,j_t)$, of length $k\ge1$, we say that $S$ is \defit{reducible} in $\bmu$ if:
\begin{itemize}
    \item $s_1$ is reducible in  $\bmu$.
    \item For every $t\in 2,\dots,k$, $s_t$ is reducible in
    $(\dots(\bmu^{s_1})^{s_2}\dots)^{s_{t-1}}$.
\end{itemize}
In such a case, we shall define the \defit{reduction} of $\bmu$ with respect to $S$ as $(\dots(\bmu^{s_1})^{s_2}\dots)^{s_{k}}$ and will denote it as $\bmu^S$.

\begin{thm}
    A sequence of pairs is reducible in an orchard network $N$ if, and only if, it is so for $\bmu(N)$. Moreover, in such a case,
    \[\bmu(N^S)=\bmu(N)^S.\]
\end{thm}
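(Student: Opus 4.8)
The plan is to argue by induction on the length $k$ of the sequence $S=s_1\cdots s_k$, proving simultaneously both the equivalence of reducibility in $N$ and in $\bmu(N)$ and the identity $\bmu(N^S)=\bmu(N)^S$. The two single-step results already available are exactly what drive the induction: Corollary~\ref{pair-reducible-in-mu-iff-in-net}, which says that one pair $(i,j)$ is reducible in a network if and only if it is reducible in its $\mu$-representation, and Proposition~\ref{mu-of-reductions}, which says that in that case $\bmu(N^{(i,j)})=\bmu(N)^{(i,j)}$. The base case $k=1$ is precisely these two statements.

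For the inductive step I would peel off the last pair rather than the first. Write $S''=s_1\cdots s_{k-1}$, so that, directly from the recursive definition of reducibility of a sequence, $S$ is reducible in $N$ if and only if $S''$ is reducible in $N$ and $s_k$ is reducible in $N^{S''}$; the analogous statement holds verbatim for $\bmu(N)$. Applying the induction hypothesis to the orchard network $N$ and the shorter sequence $S''$ gives that $S''$ is reducible in $N$ if and only if it is reducible in $\bmu(N)$, and, when this holds, $\bmu(N^{S''})=\bmu(N)^{S''}$. The advantage of peeling off the last pair is that the induction hypothesis is invoked on the original network $N$, so I never need to know that an intermediate reduction is again orchard.

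It then remains to transfer the condition on the last pair. Since $N^{S''}$ is a network, Corollary~\ref{pair-reducible-in-mu-iff-in-net} applies to it and yields that $s_k$ is reducible in $N^{S''}$ if and only if it is reducible in $\bmu(N^{S''})=\bmu(N)^{S''}$; combining this with the equivalence for $S''$ gives the equivalence of reducibility of the whole sequence $S$. For the identity I would chain $\bmu(N^S)=\bmu\bigl((N^{S''})^{s_k}\bigr)=\bmu(N^{S''})^{s_k}=\bigl(\bmu(N)^{S''}\bigr)^{s_k}=\bmu(N)^S$, where the second equality is Proposition~\ref{mu-of-reductions} applied to the reducible pair $s_k$ in $N^{S''}$ and the third is the induction hypothesis. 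The converse direction runs along the same chain read backwards.

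The only point needing care — and the closest thing to an obstacle — is that the reduction $\bmu^{s_k}$ must be unambiguously either a cherry-reduction or a reticulated-cherry-reduction, i.e. $s_k$ must not be simultaneously a cherry and a reticulated-cherry of $\bmu(N^{S''})$. This is guaranteed because $s_k$ is reducible in the network $N^{S''}$, where being a cherry and being a reticulated-cherry are mutually exclusive as structural conditions, and Propositions~\ref{cherry-in-terms-of-mu} and~\ref{ret-cherry-in-terms-of-mu} transport this exclusivity to $\bmu(N^{S''})$ (this is exactly Corollary~\ref{cherry-and-ret-cherry-incompatible}). Everything else is bookkeeping: matching the recursive definition of a reducible sequence to the peeling, and verifying that each invocation of the two single-step results is made at a genuine network, namely $N^{S''}$.
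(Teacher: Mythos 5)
Your proposal is correct and is essentially the paper's argument: the paper's proof consists of the single sentence that the theorem ``follows from the iterative application of Corollary~\ref{pair-reducible-in-mu-iff-in-net} and Proposition~\ref{mu-of-reductions}'', and your induction on the length of $S$, peeling off the last pair, is just the explicit form of that iteration. Your additional observation that the well-definedness of $\bmu^{s_k}$ only requires $N^{S''}$ to be a network (so one never needs to know that intermediate reductions remain orchard) is a sensible point of care that the paper leaves implicit.
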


\begin{proof}
    If follows from the iterative application of Corollary~\ref{pair-reducible-in-mu-iff-in-net} and Proposition~\ref{mu-of-reductions}.
\end{proof}

We say that $S$, a reducible sequence in $\bmu$, is \defit[reducible sequence]{complete} if $\bmu^S=\{\delta_i\}$ for some integer $i$.

\begin{coro}
    \label{comp-red-seq-net-mu}
    A reducible sequence of $N$ is complete if, and only if, it is so for $\bmu(N)$.
\end{coro}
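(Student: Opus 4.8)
The plan is to leverage the preceding theorem that a sequence of pairs is reducible in $N$ if and only if it is reducible in $\bmu(N)$, together with its accompanying equation $\bmu(N^S)=\bmu(N)^S$. The statement to prove is that a reducible sequence $S$ of $N$ is complete if and only if it is complete for $\bmu(N)$, where completeness for networks means $N^S=I_i$ for some $i$, and completeness for $\mu$-representations means $\bmu^S=\{\delta_i\}$ for some integer $i$. Since the hard work has already been done in the previous theorem, this corollary should follow almost immediately by translating the terminal condition through the equality $\bmu(N^S)=\bmu(N)^S$.

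First I would note that by the (unnamed) theorem just proved, $S$ being reducible in $N$ is equivalent to $S$ being reducible in $\bmu(N)$, so we may assume $S$ is reducible in both and focus purely on the completeness condition. The key observation is that the terminal network of a complete reduction is the trivial network $I_i$, and we recall from the excerpt that $\bmu(I_i)=\{\delta_i\}$. Thus the forward direction is direct: if $N^S=I_i$, then applying $\bmu$ and using $\bmu(N^S)=\bmu(N)^S$ gives $\bmu(N)^S=\bmu(N^S)=\bmu(I_i)=\{\delta_i\}$, so $S$ is complete for $\bmu(N)$.

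For the converse, suppose $\bmu(N)^S=\{\delta_i\}$ for some $i$. Again by $\bmu(N^S)=\bmu(N)^S$, this means $\bmu(N^S)=\{\delta_i\}$, and I must conclude that $N^S=I_i$. The point here is that $N^S$ is itself an orchard network (a reduction of an orchard network remains orchard, since the reductions are exactly the cherry and reticulated-cherry reductions that define the class), and by the earlier results its $\mu$-representation is a set consisting of the $\mu$-vectors of its tree nodes. Having $\bmu(N^S)=\{\delta_i\}$ means $N^S$ has exactly one tree node, namely a single leaf labeled $i$ (since $\delta_i$ is the $\mu$-vector of the leaf $i$), with no reticulations (as $\mu_0=0$ for $\delta_i$); a phylogenetic network whose only tree-type node is a single leaf must be $I_i$, the trivial network with root and leaf $i$. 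This identifies $N^S$ uniquely as $I_i$.

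The main obstacle, modest as it is, lies in the converse: one must argue cleanly that a network with $\bmu$-representation equal to the singleton $\{\delta_i\}$ is forced to be $I_i$, rather than merely having matching $\mu$-data. This requires observing that $V_T(N^S)$ has exactly one element whose $\mu$-vector is $\delta_i$, forcing that element to be the leaf $i$ and precluding any reticulations or further tree structure. Given the structural constraints on phylogenetic networks from the preliminaries, this is forced; one could also simply invoke Theorem~\ref{same-crs-implies-iso} at the level of the trivial reduction, or note directly that the equality $\bmu(N^S)=\{\delta_i\}$ pins down a single-leaf network. I would keep the argument brief, citing $\bmu(I_i)=\{\delta_i\}$ and the theorem's equation as the two essential ingredients.
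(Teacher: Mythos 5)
Your proof is correct and follows the same route the paper intends (the paper in fact states this corollary with no written proof, treating it as immediate from the preceding theorem's identity $\bmu(N^S)=\bmu(N)^S$ together with $\bmu(I_i)=\{\delta_i\}$). Your extra care in the converse --- checking that $\bmu(N^S)=\{\delta_i\}$ forces $N^S=I_i$ --- is a welcome addition; just note that the parenthetical ``$\mu_0=0$ precludes reticulations'' is not by itself sufficient (it only rules out reticulations below the leaf $i$), and the clean argument is the one you also give: with no internal tree nodes and a root of outdegree $1$, a topmost reticulation could not have two parents, so none exist.
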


\begin{thm}
\label{mu-classifies}
    Let $N,N'$ be two orchard networks on $X,X'\subseteq[n]$. Then, $N\cong N'$ if, and only if,
    $\bmu(N)=\bmu(N')$.
\end{thm}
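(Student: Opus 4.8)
The plan is to prove the two implications separately, with the forward direction being essentially immediate and the reverse direction carrying the real content. For the forward implication, suppose $N\cong N'$. Since the $\mu$-vectors count numbers of paths to leaves (and to reticulations), and an isomorphism is the identity on $X$ and preserves the arc relation, it carries tree nodes to tree nodes and preserves all path counts to each labeled leaf and to the set of reticulations; hence it induces a bijection between $V_T(N)$ and $V_T(N')$ respecting $\mu$-vectors, so $\bmu(N)=\bmu(N')$ as (multi)sets.

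\medskip

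For the reverse implication, assume $\bmu(N)=\bmu(N')$, and let $\bmu=\bmu(N)$ denote this common $\mu$-representation. The strategy is to route everything through a \emph{single} complete reducible sequence and then invoke Theorem~\ref{same-crs-implies-iso}. First I would choose a complete reducible sequence $S$ for the orchard network $N$, which exists by definition. By Corollary~\ref{comp-red-seq-net-mu}, $S$ is then a complete reducible sequence for $\bmu(N)=\bmu$. Since $\bmu(N')=\bmu$ as well, the same corollary (applied to $N'$) tells us that $S$ is complete for $\bmu(N')$, and hence, again by Corollary~\ref{comp-red-seq-net-mu} in the other direction, $S$ is a complete reducible sequence for the network $N'$ itself. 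Thus $S$ is simultaneously a complete reduction sequence for both $N$ and $N'$, and Theorem~\ref{same-crs-implies-iso} yields $N\cong N'$.

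\medskip

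The key point making this work is that the notions of ``reducible pair'', ``reducible sequence'' and ``complete sequence'' have all been transported faithfully between a network and its $\mu$-representation by Corollary~\ref{pair-reducible-in-mu-iff-in-net}, the preceding theorem, and Corollary~\ref{comp-red-seq-net-mu}; these let reducibility-in-$\bmu$ serve as a common intermediary that depends only on the shared object $\bmu$, and not on which of $N,N'$ it came from. I expect the main obstacle to be a subtle one about the index $i$ in the completeness condition: completeness of $S$ for $\bmu$ means $\bmu^S=\{\delta_i\}$ for some $i$, while completeness for a network means reduction to $I_i$. One must check that the specific leaf $i$ agrees, i.e.\ that $N^S=I_i$ with the \emph{same} $i$ picked out by $\bmu^S$, so that Theorem~\ref{same-crs-implies-iso} applies with a genuinely common target; this follows because $\bmu(I_i)=\{\delta_i\}$ and $\bmu$ determines the value of $i$ uniquely, but it is the detail that needs explicit verification. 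A second, minor care point is that $N$ and $N'$ are a priori on possibly different leaf sets $X,X'\subseteq[n]$; however, $\bmu(N)=\bmu(N')$ forces $X=X'$ since each $\delta_i$ with $i$ a leaf appears as the $\mu$-vector of that leaf, so the leaf sets are read off from $\bmu$ and the hypothesis of Theorem~\ref{same-crs-implies-iso} (two networks sharing a complete reduction sequence) is legitimately met.
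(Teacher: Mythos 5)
Your proposal is correct and follows essentially the same route as the paper: take a complete reducible sequence $S$ for $N$, transfer it via Corollary~\ref{comp-red-seq-net-mu} through the common object $\bmu(N)=\bmu(N')$ to conclude that $S$ is also complete for $N'$, and then invoke Theorem~\ref{same-crs-implies-iso}. Your extra remarks about the leaf index $i$ and the equality $X=X'$ are sensible bookkeeping that the paper leaves implicit, but they do not change the argument.
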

\begin{proof}
    The ``only if'' part of the statement is clear. Conversely, assume that $\bmu(N)=\bmu(N')$, and let $S$ be a complete reducible sequence for $N$. Then, thanks to Corollary~\ref{comp-red-seq-net-mu}, $S$ is also a complete reducible sequence for $\bmu(N)$, and since $\bmu(N)=\bmu(N')$, applying again Corollary~\ref{comp-red-seq-net-mu}, it follows that $S$ is also a complete reducible sequence for $N'$. Finally, applying Theorem~\ref{same-crs-implies-iso} we get that $N\cong N'$.
\end{proof}

We finish this section noticing that $\mu$-representations have greater separation power than the \emph{original} $\mu$-representations (or, equivalently, ancestral profiles). For instance, the networks in Figure~\ref{fig:same_ancestral} (adapted from  \cite[Figure~2]{Bai2021}) have the same \emph{original} $\mu$-representation (or ancestral profile), but different \emph{extended} $\mu$-representation. Indeed, the internal tree nodes have been ordered so that the ancestral tuples of all leaves are the same, but $\mu^{N_1}(v_4)=(2,0,1,0,1)$, $\mu^{N_1}(v_5)=(1,0,0,1,1)$, while $\mu^{N_2}(v_4)=(1,0,1,0,1)$, $\mu^{N_2}(v_5)=(2,0,0,1,1)$.

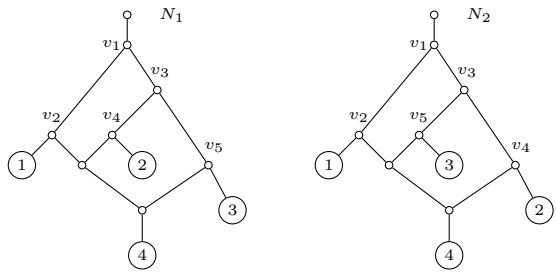
\begin{figure}[t]
\centering
\begin{tabular}{@{}c@{}}
\begin{tikzpicture}[scale=0.8]
\draw(0,1.5) node[tre] (1) {};
\draw (1) node {\tiny $1$};
\draw(2,1.5) node[tre] (2) {};
\draw (2) node {\tiny $2$};
\draw(3.5,0.75) node[tre] (3) {};
\draw (3) node {\tiny $3$};
\draw(2,0) node[tre] (4) {};
\draw (4) node {\tiny $4$};
\draw(0.5,2) node[small] (v2) {};
\draw (0.5,2.3) node {\tiny $v_2$};
\draw(1.5,2) node[small] (v4) {};
\draw (1.5,2.3) node {\tiny $v_4$};
\draw(3.1,1.5) node[small] (v5) {};
\draw (3.2,1.8) node {\tiny $v_5$};
\draw(2.25,2.75) node[small] (v3) {};
\draw (2.3,3.05) node {\tiny $v_3$};
\draw(1.75,3.5) node[small] (v1) {};
\draw (1.5,3.5) node {\tiny $v_1$};
\draw(1,1.5) node[small] (r1) {};
%\draw (2.1,3.5) node {\tiny $u_4$};
\draw(2,0.75) node[small] (r2) {};
%\draw (1.3,4) node {\tiny $u_5$};
%\draw(1,4.5) node[small] (newr) {};
%\draw (1.5,4.5) node {\tiny $N$};

%\draw[](newr)--(r);
\draw[](v1)--(v2);
\draw[](v1)--(v3);
\draw[](v2)--(1);
\draw[](v2)--(r1);
\draw[](v3)--(v4);
\draw[](v3)--(v5);
\draw[](v4)--(2);
\draw[](v4)--(r1);
\draw[](r1)--(r2);
\draw[](v5)--(3);
\draw[](v5)--(r2);
\draw[](r2)--(4);
\draw (1.75,4) node[small] (rho) {};
\draw (2.5,4) node {\tiny $N_1$};
\draw[](rho)--(v1);
\end{tikzpicture}
\qquad
\begin{tikzpicture}[scale=0.8]
\draw(0,1.5) node[tre] (1) {};
\draw (1) node {\tiny $1$};
\draw(2,1.5) node[tre] (2) {};
\draw (2) node {\tiny $3$};
\draw(3.5,0.75) node[tre] (3) {};
\draw (3) node {\tiny $2$};
\draw(2,0) node[tre] (4) {};
\draw (4) node {\tiny $4$};

\draw(0.5,2) node[small] (v2) {};
\draw (0.5,2.3) node {\tiny $v_2$};
\draw(1.5,2) node[small] (v4) {};
\draw (1.5,2.3) node {\tiny $v_5$};
\draw(3.1,1.5) node[small] (v5) {};
\draw (3.2,1.8) node {\tiny $v_4$};
\draw(2.25,2.75) node[small] (v3) {};
\draw (2.3,3.05) node {\tiny $v_3$};
\draw(1.75,3.5) node[small] (v1) {};
\draw (1.5,3.5) node {\tiny $v_1$};
\draw(1,1.5) node[small] (r1) {};
%\draw (2.1,3.5) node {\tiny $u_4$};
\draw(2,0.75) node[small] (r2) {};
%\draw (1.3,4) node {\tiny $u_5$};
%\draw(1,4.5) node[small] (newr) {};
%\draw (1.5,4.5) node {\tiny $N$};

%\draw[](newr)--(r);
\draw[](v1)--(v2);
\draw[](v1)--(v3);
\draw[](v2)--(1);
\draw[](v2)--(r1);
\draw[](v3)--(v4);
\draw[](v3)--(v5);
\draw[](v4)--(2);
\draw[](v4)--(r1);
\draw[](r1)--(r2);
\draw[](v5)--(3);
\draw[](v5)--(r2);
\draw[](r2)--(4);
\draw (1.75,4) node[small] (rho) {};
\draw (2.5,4) node {\tiny $N_2$};
\draw[](rho)--(v1);
\end{tikzpicture}
 \end{tabular}
 \caption{Non-isomorphic networks with the same \emph{original} $\mu$-representation but different \emph{extended} $\mu$-representation.}
 \label{fig:same_ancestral}
\end{figure}

%\section{Polynomial representation of $\mu$-data}

%\section{Computational aspects}
%\label{sec:computational}

\subsection*{A sound distance on the class of orchard networks}
\label{sec:mu-distance}

As a by-product, we get that $\mu$-representations can be used to define a sound distance measure on the set of (isomorphims classes of) orchard networks. We recall that defining such a distance amounts to giving a mapping that assigns to every pair of networks $N_1,N_2$ a real number $d(N_1,N_2)$ with the properties:
\begin{enumerate}
    \item (positivity) $d(N_1,N_2)\ge 0$;
    \item (separation) $d(N_1,N_2)=0$ if, and only if, $N_1\cong N_2$;
    \item (symmetry) $d(N_1,N_2)=d(N_2,N_1)$;
    \item (triangular inequality) $d(N_1,N_3)\le d(N_1,N_2)+d(N_2,N_3)$.
\end{enumerate}

\begin{thm}
    The mapping
\[d_{\mu}(N_1,N_2):=|\bmu(N_1)\bigtriangleup \bmu(N_2)|,\]
defines a distance on the set of isomorphism classes of orchard networks over sets of taxa included in $[n]$.   
\end{thm}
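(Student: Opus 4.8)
The plan is to verify the four defining properties of a distance in turn, treating $d_\mu$ as the cardinality of a symmetric difference of two \emph{sets}. First I would record the standing observation that, by the corollary immediately preceding this theorem, the $\mu$-representation of any orchard network is a genuine set (no repeated $\mu$-vectors), so that $\bmu(N_1)\bigtriangleup\bmu(N_2)$ is an ordinary set-theoretic symmetric difference and its cardinality is a well-defined non-negative integer; this already gives positivity. Moreover, the map descends to isomorphism classes: by the ``only if'' direction of Theorem~\ref{mu-classifies}, isomorphic orchard networks have identical $\mu$-representations, so $d_\mu$ does not depend on the chosen representatives.

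Symmetry is immediate from the identity $A\bigtriangleup B=B\bigtriangleup A$. For separation, observe that $d_\mu(N_1,N_2)=0$ holds precisely when $\bmu(N_1)\bigtriangleup\bmu(N_2)=\varnothing$, that is, when $\bmu(N_1)=\bmu(N_2)$; by Theorem~\ref{mu-classifies} this is equivalent to $N_1\cong N_2$. This is the only step carrying genuine content, and it rests entirely on the classification result established above.

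The triangular inequality follows from the standard set-theoretic inclusion
\[\bmu(N_1)\bigtriangleup\bmu(N_3)\subseteq\bigl(\bmu(N_1)\bigtriangleup\bmu(N_2)\bigr)\cup\bigl(\bmu(N_2)\bigtriangleup\bmu(N_3)\bigr),\]
which I would justify by a short case analysis: any element lying in exactly one of $\bmu(N_1)$ and $\bmu(N_3)$ is separated from $\bmu(N_2)$ on at least one side, hence lies in one of the two symmetric differences on the right-hand side. Passing to cardinalities and using subadditivity $|P\cup Q|\le|P|+|Q|$ then yields $d_\mu(N_1,N_3)\le d_\mu(N_1,N_2)+d_\mu(N_2,N_3)$.

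The main obstacle is not located in this proof at all but in the machinery already built: the separation property is exactly Theorem~\ref{mu-classifies}, and everything else is the routine verification that the cardinality of a symmetric difference is the Hamming metric on the subsets of a fixed universe (here, the universe of non-negative integer vectors indexed by $[n]^*$). Thus, once the classification theorem is in hand, the remaining steps are purely formal.
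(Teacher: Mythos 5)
Your proof is correct and follows the same route as the paper: the paper likewise observes that only the separation property carries content and delegates it to Theorem~\ref{mu-classifies}, leaving the remaining axioms as standard facts about symmetric differences of sets. You simply spell out in full the routine verifications that the paper omits.
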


\begin{proof}
    Since the mapping is based on the cardinality of the symmetric difference of sets, only the separation property needs to be checked, which is given by Theorem~\ref{mu-classifies}.
\end{proof}

Note that, as it happened in the case of the \emph{original} $\mu$-distance~\cite{cardona2008comparison}, the distance $d_{\mu}$ just defined, when restricted to phylogenetic trees, gives the classical Robinson-Foulds distance~\cite{robinson1981comparison}.
Note also that distances are usually defined between phylogenetic networks (or trees) over the same set of taxa, but $d_{\mu}$  can also be applied to networks over different set of taxa, as long as they belong to some common superset.

\section{Conclusions} 
\label{sec:conclusions}

In this paper, we have presented an extension of the $\mu$-representation of a network that allows us to separate and define distances on arbitrary orchard networks and is computationally efficient. Notice, however, that the $\mu$-representation cannot separate more general kinds of networks. For instance, the two networks in Figure~\ref{fig:false_for_others}, which are tree-sibling \cite{cardona2008distance}, stack-free \cite{semple2018phylogenetic}, separable \cite{pons2022polynomial}, FU-stable \cite{huber2016folding} and tree-based \cite{francis2015phylogenetic}, have the same $\mu$-representation but are not isomorphic. It should come as no surprise that invariants that are easy to compute and to compare cannot separate arbitrary networks since, as already observed, even for the (quite restrictive) case of tree-sibling time-consistent networks, its comparison is Graph Isomorphism Complete.

\begin{figure}[t]
\centering
\begin{tabular}{@{}c@{}}
\begin{tikzpicture}[scale=1]
% arrel
\draw(2,4.5) node[small] (r) {};
\draw(2,5) node[small] (rho) {};
\draw[](rho)--(r);
\draw (2.4,5) node {\tiny $N$};

% fulles
\draw(0,0) node[tre] (1) {};
\draw (1) node {\tiny $1$};
\draw(1,0) node[tre] (2) {};
\draw (2) node {\tiny $2$};
\draw(2,0) node[tre] (3) {};
\draw (3) node {\tiny $3$};
\draw(3,0) node[tre] (4) {};
\draw (4) node {\tiny $4$};
\draw(4,0) node[tre] (5) {};
\draw (5) node {\tiny $5$};

% reticulacions
\draw(0,1) node[small] (A) {};
%\draw (0.2,0.9) node {\tiny $\lambda$};
\draw(1,1) node[small] (B) {};
%\draw (1.2,0.9) node {\tiny $\lambda$};
\draw(3,1) node[small] (C) {};
%\draw (3.2,0.9) node {\tiny $\lambda$};
\draw(4,1) node[small] (D) {};
%\draw (4.2,0.9) node {\tiny $\lambda$};

% interiors
\draw(0,1.7) node[small] (a) {};
\draw(0.5,2.4) node[small] (b) {};
\draw(1,3.1) node[small] (c) {};
\draw(1.5,3.8) node[small] (d) {};
\draw(4,1.7) node[small] (e) {};
\draw(3.5,2.4) node[small] (f) {};
\draw(3,3.1) node[small] (g) {};

% arestes
\draw[](A)--(1);
\draw[](B)--(2);
\draw[](C)--(4);
\draw[](D)--(5);
\draw[](r)--(d);
\draw[](d)--(c);
\draw[](c)--(b);
\draw[](b)--(a);
\draw[](a)--(A);
\draw[](r)--(g);
\draw[](g)--(f);
\draw[](f)--(e);
\draw[](e)--(D);
\draw[](a)--(B);
\draw[](b)--(C);
\draw[](c)--(D);
\draw[](d)--(3);
\draw[](e)--(C);
\draw[](f)--(B);
\draw[](g)--(A);
\end{tikzpicture}
\qquad
\begin{tikzpicture}[scale=1]
% arrel
\draw(2,4.5) node[small] (r) {};
\draw(2,5) node[small] (rho) {};
\draw[](rho)--(r);
\draw (2.4,5) node {\tiny $N'$};

% fulles
\draw(0,0) node[tre] (1) {};
\draw (1) node {\tiny $5$};
\draw(1,0) node[tre] (2) {};
\draw (2) node {\tiny $4$};
\draw(2,0) node[tre] (3) {};
\draw (3) node {\tiny $3$};
\draw(3,0) node[tre] (4) {};
\draw (4) node {\tiny $2$};
\draw(4,0) node[tre] (5) {};
\draw (5) node {\tiny $1$};

% reticulacions
\draw(0,1) node[small] (A) {};
%\draw (0.2,0.9) node {\tiny $\lambda$};
\draw(1,1) node[small] (B) {};
%\draw (1.2,0.9) node {\tiny $\lambda$};
\draw(3,1) node[small] (C) {};
%\draw (3.2,0.9) node {\tiny $\lambda$};
\draw(4,1) node[small] (D) {};
%\draw (4.2,0.9) node {\tiny $\lambda$};

% interiors
\draw(0,1.7) node[small] (a) {};
\draw(0.5,2.4) node[small] (b) {};
\draw(1,3.1) node[small] (c) {};
\draw(1.5,3.8) node[small] (d) {};
\draw(4,1.7) node[small] (e) {};
\draw(3.5,2.4) node[small] (f) {};
\draw(3,3.1) node[small] (g) {};

% arestes
\draw[](A)--(1);
\draw[](B)--(2);
\draw[](C)--(4);
\draw[](D)--(5);
\draw[](r)--(d);
\draw[](d)--(c);
\draw[](c)--(b);
\draw[](b)--(a);
\draw[](a)--(A);
\draw[](r)--(g);
\draw[](g)--(f);
\draw[](f)--(e);
\draw[](e)--(D);
\draw[](a)--(B);
\draw[](b)--(C);
\draw[](c)--(D);
\draw[](d)--(3);
\draw[](e)--(C);
\draw[](f)--(B);
\draw[](g)--(A);
\end{tikzpicture}
 \end{tabular}
 \caption{Non-isomorphic networks with the same $\mu$-representation.}
 \label{fig:false_for_others}
\end{figure}
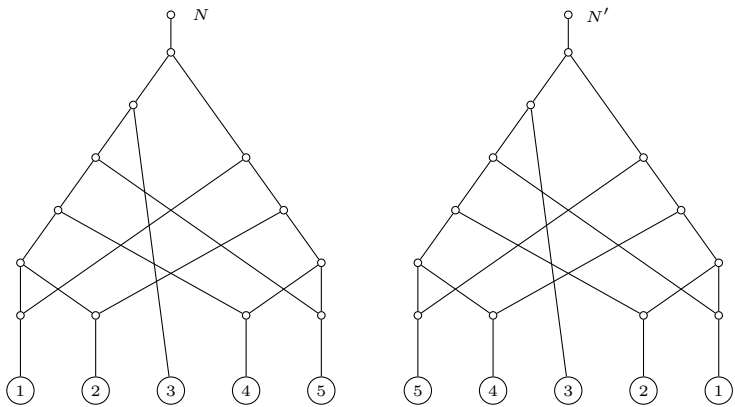

We have implemented the decomposition and reconstruction of orchard networks using reducible sequences, the computation of the $\mu$-representation of a network, the reconstruction of an orchard network given its $\mu$-represen\-ta\-tion, which recovers the original network in case it was orchard, and also the computation of the $\mu$-distance between networks. This implementation appears in version 2.2 of the Python package \verb|phylonetwork|, which is available from \url{https://pypi.org/project/phylonetwork/}. We have also made a demo of the aforementioned features, available at \url{https://github.com/gerardet46/OrchardMuRepresentation}, which can be run online with the link to \url{https://mybinder.org/} provided therein.

\bibliographystyle{elsarticle-num-names}
\bibliography{bibliography}

\begin{thebibliography}{24}
\expandafter\ifx\csname natexlab\endcsname\relax\def\natexlab#1{#1}\fi
\providecommand{\url}[1]{\texttt{#1}}
\providecommand{\href}[2]{#2}
\providecommand{\path}[1]{#1}
\providecommand{\DOIprefix}{doi:}
\providecommand{\ArXivprefix}{arXiv:}
\providecommand{\URLprefix}{URL: }
\providecommand{\Pubmedprefix}{pmid:}
\providecommand{\doi}[1]{\href{http://dx.doi.org/#1}{\path{#1}}}
\providecommand{\Pubmed}[1]{\href{pmid:#1}{\path{#1}}}
\providecommand{\bibinfo}[2]{#2}
\ifx\xfnm\relax \def\xfnm[#1]{\unskip,\space#1}\fi
%Type = Article
\bibitem[{Martin(1999)}]{martin1999mosaic}
\bibinfo{author}{W.~Martin},
\newblock \bibinfo{title}{Mosaic bacterial chromosomes: a challenge en route to
  a tree of genomes},
\newblock \bibinfo{journal}{Bioessays} \bibinfo{volume}{21}
  (\bibinfo{year}{1999}) \bibinfo{pages}{99--104}.
%Type = Article
\bibitem[{Linder and Rieseberg(2004)}]{linder2004reconstructing}
\bibinfo{author}{C.~R. Linder}, \bibinfo{author}{L.~H. Rieseberg},
\newblock \bibinfo{title}{Reconstructing patterns of reticulate evolution in
  plants},
\newblock \bibinfo{journal}{American journal of botany} \bibinfo{volume}{91}
  (\bibinfo{year}{2004}) \bibinfo{pages}{1700--1708}.
%Type = Article
\bibitem[{Kong et~al.(2022)Kong, Pons, Kubatko, and Wicke}]{kong2022classes}
\bibinfo{author}{S.~Kong}, \bibinfo{author}{J.~C. Pons},
  \bibinfo{author}{L.~Kubatko}, \bibinfo{author}{K.~Wicke},
\newblock \bibinfo{title}{Classes of explicit phylogenetic networks and their
  biological and mathematical significance},
\newblock \bibinfo{journal}{Journal of Mathematical Biology}
  \bibinfo{volume}{84} (\bibinfo{year}{2022}) \bibinfo{pages}{47}.
%Type = Article
\bibitem[{Erd{\H o}s et~al.(2019)Erd{\H o}s, Semple, and
  Steel}]{erdos2019class}
\bibinfo{author}{P.~L. Erd{\H o}s}, \bibinfo{author}{C.~Semple},
  \bibinfo{author}{M.~Steel},
\newblock \bibinfo{title}{{A class of phylogenetic networks reconstructable
  from ancestral profiles}},
\newblock \bibinfo{journal}{Mathematical Biosciences} \bibinfo{volume}{313}
  (\bibinfo{year}{2019}) \bibinfo{pages}{33--40}.
%Type = Article
\bibitem[{Janssen and Murakami(2021)}]{janssen2021cherry}
\bibinfo{author}{R.~Janssen}, \bibinfo{author}{Y.~Murakami},
\newblock \bibinfo{title}{On cherry-picking and network containment},
\newblock \bibinfo{journal}{Theoretical Computer Science} \bibinfo{volume}{856}
  (\bibinfo{year}{2021}) \bibinfo{pages}{121--150}.
%Type = Article
\bibitem[{van Iersel et~al.(2022)van Iersel, Janssen, Jones, and
  Murakami}]{van2022horizontal}
\bibinfo{author}{L.~van Iersel}, \bibinfo{author}{R.~Janssen},
  \bibinfo{author}{M.~Jones}, \bibinfo{author}{Y.~Murakami},
\newblock \bibinfo{title}{Orchard networks are trees with additional horizontal
  arcs},
\newblock \bibinfo{journal}{Bulletin of Mathematical Biology}
  \bibinfo{volume}{84} (\bibinfo{year}{2022}) \bibinfo{pages}{1--21}.
%Type = Article
\bibitem[{Moret et~al.(2004)Moret, Nakhleh, Warnow, Linder, Tholse, Padolina,
  Sun, and Timme}]{moret2004phylogenetic}
\bibinfo{author}{B.~M. Moret}, \bibinfo{author}{L.~Nakhleh},
  \bibinfo{author}{T.~Warnow}, \bibinfo{author}{C.~R. Linder},
  \bibinfo{author}{A.~Tholse}, \bibinfo{author}{A.~Padolina},
  \bibinfo{author}{J.~Sun}, \bibinfo{author}{R.~Timme},
\newblock \bibinfo{title}{Phylogenetic networks: modeling, reconstructibility,
  and accuracy},
\newblock \bibinfo{journal}{IEEE/ACM Transactions on Computational Biology and
  Bioinformatics} \bibinfo{volume}{1} (\bibinfo{year}{2004})
  \bibinfo{pages}{13--23}.
%Type = Article
\bibitem[{Francis and Steel(2015)}]{francis2015phylogenetic}
\bibinfo{author}{A.~R. Francis}, \bibinfo{author}{M.~Steel},
\newblock \bibinfo{title}{Which phylogenetic networks are merely trees with
  additional arcs?},
\newblock \bibinfo{journal}{Systematic biology} \bibinfo{volume}{64}
  (\bibinfo{year}{2015}) \bibinfo{pages}{768--777}.
%Type = Article
\bibitem[{Cardona et~al.(2008)Cardona, Rossell{\'o}, and
  Valiente}]{cardona2008comparison}
\bibinfo{author}{G.~Cardona}, \bibinfo{author}{F.~Rossell{\'o}},
  \bibinfo{author}{G.~Valiente},
\newblock \bibinfo{title}{Comparison of tree-child phylogenetic networks},
\newblock \bibinfo{journal}{IEEE/ACM Transactions on Computational Biology and
  Bioinformatics} \bibinfo{volume}{6} (\bibinfo{year}{2008})
  \bibinfo{pages}{552--569}.
%Type = Article
\bibitem[{Cardona et~al.(2014)Cardona, Llabrés, Rosselló, and
  Valiente}]{cardona_comparison_2014}
\bibinfo{author}{G.~Cardona}, \bibinfo{author}{M.~Llabrés},
  \bibinfo{author}{F.~Rosselló}, \bibinfo{author}{G.~Valiente},
\newblock \bibinfo{title}{The comparison of tree-sibling time consistent
  phylogenetic networks is graph isomorphism-{Complete}},
\newblock \bibinfo{journal}{The Scientific World Journal}
  \bibinfo{volume}{2014} (\bibinfo{year}{2014}) \bibinfo{pages}{254279}.
  \DOIprefix\doi{10.1155/2014/254279}.
%Type = Article
\bibitem[{Willson(2010)}]{willson2010regular}
\bibinfo{author}{S.~Willson},
\newblock \bibinfo{title}{Regular networks can be uniquely constructed from
  their trees},
\newblock \bibinfo{journal}{IEEE/ACM Transactions on Computational Biology and
  Bioinformatics} \bibinfo{volume}{8} (\bibinfo{year}{2010})
  \bibinfo{pages}{785--796}.
%Type = Article
\bibitem[{Gambette and Huber(2012)}]{Gambette2012}
\bibinfo{author}{P.~Gambette}, \bibinfo{author}{K.~T. Huber},
\newblock \bibinfo{title}{On encodings of phylogenetic networks of bounded
  level},
\newblock \bibinfo{journal}{Journal of Mathematical Biology}
  \bibinfo{volume}{65} (\bibinfo{year}{2012}) \bibinfo{pages}{157--180}.
  \DOIprefix\doi{10.1007/s00285-011-0456-y}.
%Type = Article
\bibitem[{Van~Iersel and Moulton(2014)}]{van2014trinets}
\bibinfo{author}{L.~Van~Iersel}, \bibinfo{author}{V.~Moulton},
\newblock \bibinfo{title}{Trinets encode tree-child and level-2 phylogenetic
  networks},
\newblock \bibinfo{journal}{Journal of Mathematical Biology}
  \bibinfo{volume}{68} (\bibinfo{year}{2014}) \bibinfo{pages}{1707--1729}.
%Type = Article
\bibitem[{Cardona et~al.(2009)Cardona, Llabres, Rossello, and
  Valiente}]{cardona2009metrics2}
\bibinfo{author}{G.~Cardona}, \bibinfo{author}{M.~Llabres},
  \bibinfo{author}{F.~Rossello}, \bibinfo{author}{G.~Valiente},
\newblock \bibinfo{title}{Metrics for phylogenetic networks {II}: Nodal and
  triplets metrics},
\newblock \bibinfo{journal}{IEEE/ACM Transactions on Computational Biology and
  Bioinformatics} \bibinfo{volume}{6} (\bibinfo{year}{2009})
  \bibinfo{pages}{454--469}. \DOIprefix\doi{10.1109/TCBB.2008.127}.
%Type = Article
\bibitem[{Cardona et~al.(2011)Cardona, Llabrés, Rosselló, and
  Valiente}]{Cardona2011}
\bibinfo{author}{G.~Cardona}, \bibinfo{author}{M.~Llabrés},
  \bibinfo{author}{F.~Rosselló}, \bibinfo{author}{G.~Valiente},
\newblock \bibinfo{title}{Comparison of galled trees},
\newblock \bibinfo{journal}{IEEE/ACM Transactions on Computational Biology and
  Bioinformatics} \bibinfo{volume}{8} (\bibinfo{year}{2011})
  \bibinfo{pages}{410--427}. \DOIprefix\doi{10.1109/TCBB.2010.60}.
%Type = Article
\bibitem[{Semple and Toft(2021)}]{semple2021trinets}
\bibinfo{author}{C.~Semple}, \bibinfo{author}{G.~Toft},
\newblock \bibinfo{title}{Trinets encode orchard phylogenetic networks},
\newblock \bibinfo{journal}{journal of Mathematical Biology}
  \bibinfo{volume}{83} (\bibinfo{year}{2021}) \bibinfo{pages}{1--20}.
%Type = Article
\bibitem[{Cardona et~al.(2010)Cardona, Llabrés, Rosselló, and
  Valiente}]{cardona2010path}
\bibinfo{author}{G.~Cardona}, \bibinfo{author}{M.~Llabrés},
  \bibinfo{author}{F.~Rosselló}, \bibinfo{author}{G.~Valiente},
\newblock \bibinfo{title}{Path lengths in tree-child time consistent
  hybridization networks},
\newblock \bibinfo{journal}{Information Sciences} \bibinfo{volume}{180}
  (\bibinfo{year}{2010}) \bibinfo{pages}{366--383}.
  \DOIprefix\doi{10.1016/j.ins.2009.09.013}.
%Type = Article
\bibitem[{Bordewich and Semple(2016)}]{Bordewich2016}
\bibinfo{author}{M.~Bordewich}, \bibinfo{author}{C.~Semple},
\newblock \bibinfo{title}{Determining phylogenetic networks from inter-taxa
  distances},
\newblock \bibinfo{journal}{Journal of Mathematical Biology}
  \bibinfo{volume}{73} (\bibinfo{year}{2016}) \bibinfo{pages}{283--303}.
  \DOIprefix\doi{10.1007/s00285-015-0950-8}.
%Type = Article
\bibitem[{Cardona et~al.(2008)Cardona, Llabrés, Rosselló, and
  Valiente}]{cardona2008distance}
\bibinfo{author}{G.~Cardona}, \bibinfo{author}{M.~Llabrés},
  \bibinfo{author}{F.~Rosselló}, \bibinfo{author}{G.~Valiente},
\newblock \bibinfo{title}{A distance metric for a class of tree-sibling
  phylogenetic networks},
\newblock \bibinfo{journal}{Bioinformatics} \bibinfo{volume}{24}
  (\bibinfo{year}{2008}) \bibinfo{pages}{1481--1488}.
  \DOIprefix\doi{10.1093/bioinformatics/btn231}.
%Type = Article
\bibitem[{Bai et~al.(2021)Bai, Erdős, Semple, and Steel}]{Bai2021}
\bibinfo{author}{A.~Bai}, \bibinfo{author}{P.~L. Erdős},
  \bibinfo{author}{C.~Semple}, \bibinfo{author}{M.~Steel},
\newblock \bibinfo{title}{Defining phylogenetic networks using ancestral
  profiles},
\newblock \bibinfo{journal}{Mathematical Biosciences} \bibinfo{volume}{332}
  (\bibinfo{year}{2021}). \DOIprefix\doi{10.1016/j.mbs.2021.108537}.
%Type = Article
\bibitem[{Robinson and Foulds(1981)}]{robinson1981comparison}
\bibinfo{author}{D.~F. Robinson}, \bibinfo{author}{L.~R. Foulds},
\newblock \bibinfo{title}{Comparison of phylogenetic trees},
\newblock \bibinfo{journal}{Mathematical Biosciences} \bibinfo{volume}{53}
  (\bibinfo{year}{1981}) \bibinfo{pages}{131--147}.
  \DOIprefix\doi{10.1016/0025-5564(81)90043-2}.
%Type = Article
\bibitem[{Semple and Simpson(2018)}]{semple2018phylogenetic}
\bibinfo{author}{C.~Semple}, \bibinfo{author}{J.~Simpson},
\newblock \bibinfo{title}{When is a phylogenetic network simply an amalgamation
  of two trees?},
\newblock \bibinfo{journal}{Bulletin of mathematical biology}
  \bibinfo{volume}{80} (\bibinfo{year}{2018}) \bibinfo{pages}{2338--2348}.
%Type = Article
\bibitem[{Pons et~al.(2022)Pons, Coronado, Hendriksen, and
  Francis}]{pons2022polynomial}
\bibinfo{author}{J.~C. Pons}, \bibinfo{author}{T.~M. Coronado},
  \bibinfo{author}{M.~Hendriksen}, \bibinfo{author}{A.~Francis},
\newblock \bibinfo{title}{A polynomial invariant for a new class of
  phylogenetic networks},
\newblock \bibinfo{journal}{Plos one} \bibinfo{volume}{17}
  (\bibinfo{year}{2022}) \bibinfo{pages}{e0268181}.
%Type = Article
\bibitem[{Huber et~al.(2016)Huber, Moulton, Steel, and Wu}]{huber2016folding}
\bibinfo{author}{K.~T. Huber}, \bibinfo{author}{V.~Moulton},
  \bibinfo{author}{M.~Steel}, \bibinfo{author}{T.~Wu},
\newblock \bibinfo{title}{Folding and unfolding phylogenetic trees and
  networks},
\newblock \bibinfo{journal}{Journal of Mathematical Biology}
  \bibinfo{volume}{73} (\bibinfo{year}{2016}) \bibinfo{pages}{1761--1780}.

\end{thebibliography}

\end{document}